\documentclass[runningheads]{llncs}

\usepackage[T1]{fontenc}

\usepackage{amsmath}
\usepackage{amssymb}
\usepackage{enumerate}
\usepackage{color}
\usepackage{cite}
\usepackage{booktabs}
\usepackage{xspace}
\usepackage{marvosym}
\usepackage{definitions}
\usepackage[hypertexnames=false]{hyperref}

\urlstyle{rm}

\begin{document}

\title{The Computability Path Order for Beta-Eta-Normal Higher-Order Rewriting \\
  (Full Version)}

\titlerunning{CPO for Beta-Eta-Normal Higher-Order Rewriting}

\author{Johannes Niederhauser\textsuperscript{(\Letter)}%
\orcidID{0000-0002-8662-6834} \and
Aart Middeldorp\orcidID{0000-0001-7366-8464}}

\authorrunning{J.~Niederhauser and A.~Middeldorp}

\institute{Department of Computer Science,
University of Innsbruck, Innsbruck, Austria
\email{\{johannes.niederhauser,aart.middeldorp\}@uibk.ac.at}}

\maketitle

\begin{abstract}
We lift the computability path order and its extensions from
plain higher-order rewriting to higher-order rewriting on
$\beta\eta$-normal forms where matching modulo $\beta\eta$ is
employed. The resulting order NCPO is shown to be useful on
practical examples. In particular, it can handle systems where its
cousin NHORPO fails even when it is used together with the
powerful transformation technique of neutralization.
We also argue
that automating NCPO efficiently is straightforward using SAT/SMT
solvers whereas this cannot be said about the transformation
technique of neutralization. Our prototype implementation
supports automatic termination proof search for NCPO and is also
the first one to automate NHORPO with neutralization.
\keywords{Higher-Order Rewriting \and Termination \and Lambda Calculus}
\end{abstract}

\section{Introduction}

Higher-order rewriting is known for its abundance of different
formalisms where it is often unclear whether results for a particular
notion of higher-order rewriting can be transferred to other kinds of
higher-order rewriting \cite{K12}. In this paper, we are only
concerned with one particular formalism, namely a slightly
modified version of the higher-order rewrite systems (HRSs) {\`a}
la Nipkow \cite{MN98}. However, as the goal of this paper is to lift a
particular termination method from one formalism to another, a short
discussion about differences and similarities of the used formalisms
is necessary. First and foremost, when we talk about higher-order
rewriting in this paper, we mean the particular flavor of higher-order
rewriting which considers terms of the simply-typed lambda calculus as
objects to be rewritten. Within this class of higher-order rewrite
formalisms, we further distinguish between \emph{plain} and
\emph{normal} higher-order rewriting: Plain higher-order rewriting
uses plain syntactic matching and views lambda calculus'
$\beta$-reduction as a proper rewrite rule which has to be oriented
for termination. On the other hand, normal higher-order rewriting uses
matching modulo $\beta\eta$, which means that terms which are
$\beta\eta$-equivalent should also be equivalent with respect to the
reduction order that is used for establishing termination. Nipkow's
HRSs are a premier example of normal higher-order rewriting. HRSs are
an interesting formalism for automated reasoning as they can directly
represent the terms of higher-order logic and as such are part of the
meta-theory of interactive theorem provers like Isabelle
\cite{NPW02}. Furthermore, they come with a critical pair lemma
\cite{MN98} which is vital for automated equational reasoning with
techniques such as completion \cite{HMSW19}.
In particular, automated confluence analysis of higher-order rewriting
typically uses HRSs, as witnessed in the Confluence Competition
(2015--2020).\footnote{https://project-coco.uibk.ac.at/}

While many termination methods are available for plain higher-order
rewriting \cite{JR07,K12,BJR15}, the situation for normal
higher-order rewriting is different. For a long time, powerful
termination methods for HRSs were only based on van de Pol's results
\cite{vdP96} which employ the monotone algebra approach \cite{Z94},
i.e., semantic termination arguments which are difficult to automate
except for more specialized cases like polynomial interpretations of
fixed shape \cite{FK12}. An early version of the
\emph{higher-order recursive path order} (HORPO) which is
designed for plain higher-order rewriting was adapted for the
important subclass of pattern HRSs in \cite{vR01}, an implementation
of this approach is available in the tool \csiho~\cite{N17}. Furthermore,
the formalism of \emph{algebraic functional systems with meta-variables}
(AFSMs) captures both plain and normal rewriting and therefore makes a
larger number of termination methods available for HRSs~\cite{K12}. An
implementation of these methods is available in the tool WANDA~\cite{K20}.
However, WANDA has not been optimized for HRSs and the theory developed
around it only establishes termination for systems where all left-hand
sides are patterns \cite{M91}. A powerful recursive path order
specifically designed for HRSs and without the restriction to patterns
as left-hand sides of rules was missing until Jouannaud and Rubio
lifted their definite version of HORPO \cite{JR07} from plain to
normal higher-order rewriting, resulting in NHORPO
\cite{JR15}. However, the important extension of HORPO by the
computability closure \cite{JR07} is not considered. Instead, a novel
transformation technique named \emph{neutralization} is
introduced. Its goal is to simplify a given system with respect to the
applicability of NHORPO without affecting the system's termination
behavior.

The aim of this paper is to develop a syntax-directed reduction order
for normal higher-order rewriting which is sufficiently powerful
compared to NHORPO with neutralization but easier to automate using
SAT/SMT solvers. Our starting point is the
\emph{computability path order} (CPO) \cite{BJR15} which is defined
for plain higher-order rewriting. CPO is an extension of HORPO which
incorporates computability closure \cite{BJO99} in a sophisticated
way. We follow the general approach from \cite{JR15}
to lift the extension of core CPO with accessible subterms and small
symbols to HRSs, resulting in the \emph{$\beta\eta$-normal
computability path order} (NCPO). In particular, we will show
that there are HRSs which NCPO can prove terminating but where
NHORPO (with or without neutralization) as well as the HORPO
implementations in WANDA~\cite{K20} and \csiho~\cite{N17} fail.

In designing a reduction order for normal higher-order rewriting, one
has to deal with the well-known problems that well-founded orders
which
are compatible with $\beta\eta$ cannot be \emph{monotone} (closed under
contexts) and \emph{stable} (closed under substitutions) in general as
the following examples show:

\begin{example}
\label{exa:challenges}
Consider $\m{a} > \m{b}$. If $>$ is closed under contexts, we should
also have $(\lambda y.\m{c})\m{a} > (\lambda y.\m{c})\m{b}$. If we
also want $>$ to be compatible with $\beta\eta$, we obtain
$\m{c} > \m{c}$ which means that $>$ cannot be well-founded. Now
consider $F x > x$. If $>$ is closed under substitutions, we should
also have $(\lambda x.x) \m{a} > \m{a}$. If we also want $>$ to be
compatible with $\beta\eta$, we obtain $\m{a} > \m{a}$ which again
means that $>$ cannot be well-founded.
\end{example}

In \cite{vdP96}, van de Pol models substitutions via contexts using
$\beta$-reduction, thereby eliminating the second problem. For the
first problem, two different orders $>_1$ and $>_2$ are introduced
and it is shown that $s >_1 t$ implies $C[s] >_2 C[t]$ for all terms
$s$, $t$ and contexts $C$. Since the relations between $>_1$ and $>_2$
are intrinsically semantic, it is not clear how this approach can be
transferred to path orders. In \cite{JR15}, both problems were
solved by restricting the order $>$ to $\beta\eta$-normal forms and
layering it with its plain version ($\sqsupset$) in a way such that
$s > t$ implies $s\sigma{\D} \sqsupset^+ t\sigma{\D}$ for all
substitutions $\sigma$ where $s{\D}$ denotes the $\beta\eta$-normal
form of a term $s$. Then, monotonicity and well-foundedness of the
plain version can be used in order to show that $(>,\sqsupset)$
is a reduction order. We will utilize this approach in order to
define a reduction order $(>,\sqsupset)$ where $>$ is NCPO and
$\sqsupset$ is CPO.

The remainder of this paper is structured as follows.
\secref{preliminaries} introduces the basic concepts including the
type of higher-order rewriting which we are concerned with in this
paper as well as an appropriate notion of $\beta\eta$-normal
higher-order
reduction orders. \secref{ingredients} describes the necessary
ingredients of our order
regarding types, the important concept of nonversatile terms as well as
accessible subterms. Based on this, NCPO is introduced in \secref{order}
and proven to be a correct termination method for HRSs in
\secref{proof}. Finally, our prototype implementation of NCPO is
described and compared to NHORPO with neutralization in
\secref{nhorpo} before we conclude in \secref{conclusion}.

\section{Preliminaries}
\label{sec:preliminaries}

In this paper, we consider higher-order rewriting on simply-typed
lambda terms \cite{C40,B92}. Given a set $\xB$ of base types, the set
of simple types $\xT$ is defined inductively: If $a \in \xB$ then
$a \in \xT$, and for $U, V \in \xT$ also $U \to V \in \xT$. We follow
the usual convention that the function space constructor $\to$ is
right-associative, i.e., $a \to b \to c$ denotes $a \to (b \to c)$.
Throughout this text, lowercase letters $a, b, c, \dots$ denote base
types while upper case letters $T, U, V, \dots$ denote arbitrary types.
For each type $U \in \xT$ we consider an infinite set of variables
$\xV_U$ as well as a set of function symbols $\xF_U$ where
$\xV_U \cap \xF_U = \varnothing$ and
$\xV_U \cap \xV_V = \xF_U \cap \xF_V = \varnothing$ for $V \neq U$.
We denote the set of all variables by
$\xV = \bigcup \SET{\xV_U \mid U \in \xT}$. The set of all function
symbols $\xF = \bigcup \SET{\xF_U \mid U \in \xT}$ is referred to as the
\emph{signature}. We associate with each function symbol $f$ an
arity $\ar{f} \in \mathbb{N}$. The set of well-typed lambda terms
of a type $U$ ($\Lambda_U$) is defined as follows:
\begin{itemize}
\item
$\xV_U \subseteq \Lambda_U$,
\item
if $f \in \xF_{T_1 \to \cdots \to T_{\ar{f}} \to U}$ and
$t_i \in \Lambda_{T_i}$ for $1 \leq i \leq \ar{f}$ then
$f(\seq{t}) \in \Lambda_U$,
\item
if $x \in \xV_U$ and $s \in \Lambda_V$ then
$\lambda x.s \in \Lambda_{U \to V}$,
\item
if $s \in \Lambda_{U \to V}$ and $t \in \Lambda_U$ then
$st \in \Lambda_V$.
\end{itemize}
The set $\Lambda = \bigcup \SET{\Lambda_U \mid U \in \xT}$ contains all
well-typed lambda terms. We define the function
$\tau\colon \Lambda \to \xT$ as $\tau(s) = U$ if $s \in \Lambda_U$.
Throughout this paper we only consider well-typed lambda terms.
We also follow the convention that
application is left-associative, i.e. $stu$ denotes $(st)u$. Note
that our definition of lambda terms with function symbols of fixed
arity neither poses a limitation nor adds expressive power as we can
always set $\m{ar}(f) = 0$ for all $f \in \xF$ (the ``return
type'' of a function symbol does not have to be a base type) and
denote $f(\seq{t})$ by the corresponding application
$f t_1 \cdots t_n$. The addition of function symbols with fixed
arity is solely needed for an adequate definition of a recursive path
order on lambda terms. We sometimes use the shorthand
$f(\bar{t})$ to denote the application of $f$ to the list of
arguments $\bar{t}$.

We write $\fv{s}$ for the set of free variables of a term $s$. The
term $s[x/t]$ denotes the term where all free occurrences of $x$ have
been replaced by $t$ without capturing the free variables of $t$
(\emph{capture-avoiding substitution}). Due to the fact that
infinitely many variables are available for each type, this can always
be done by renaming the bound variables accordingly
(\emph{$\alpha$-renaming}). In the remainder, we abstract away from
this technicality by viewing lambda terms as equivalence classes
modulo $\alpha$-renaming.

Two fundamental concepts in lambda calculus are the notions of
$\beta$- and $\eta$-reduction which are defined as the rule schemas
$(\lambda x.s)t \Rb{\beta} s[x/t]$ and $\lambda x.ux \Rb{\eta} u$ if
$x \notin \fv{u}$. Note that both $\beta$- and $\eta$-reduction
preserve types. Every term $s$ has a unique $\beta\eta$-normal form
which we denote by $s{\D}$. Rewriting to $\beta\eta$-normal
form and the set of $\beta\eta$-normal forms are denoted by $\Rnbe$
and $\nfbe \subseteq \Lambda$, respectively.

A substitution $\sigma$ is a mapping from variables to terms of the
same type where $\dom{\sigma} = \SET{x \mid \sigma(x) \neq x}$ is finite.
We often write $\sigma$ as a set of variable bindings. Given a
substitution $\sigma = \SET{x_1 \mapsto t_1,\dots,x_n \mapsto t_n}$, we
define $s\sigma$ as the simultaneous capture-avoiding substitution
$s[x_1/t_1,\dots,x_n/t_n]$. The free variables of a substitution are
defined as follows:
$\fv{\sigma} = \bigcup \SET{\fv{\sigma(x)} \mid x \in \dom{\sigma}}$. A
substitution is said to be \emph{away from} a finite set of variables
$X$ if $(\dom{\sigma} \cup \fv{\sigma}) \cap X = \varnothing$. We
follow the convention that the postfix operations of substitution
application as well as $\D$ bind stronger than lambda
abstractions and applications, i.e.,
$u\sigma{\D}v\sigma{\D} =
(u\sigma{\D})(v\sigma{\D})$ and
$\lambda x.t\sigma{\D} = \lambda x.(t\sigma{\D})$.
If $\bar{t} = (\seq{t})$, we allow ourselves to write $\bar{t}\sigma$
($\bar{t}\sigma{\D}$) as a shorthand for
$(t_1\sigma,\dots,t_n\sigma)$
($(t_1\sigma{\D},\dots,t_n\sigma{\D})$).

Contexts $C$ are lambda terms which contain exactly one occurrence of
the special symbol $\square$ which can assume any type. We write
$C[s]$ for the lambda term $C$ where $\square$ is replaced by $s$
without employing capture-avoiding substitution.
A binary relation $R \subseteq \Lambda \times \Lambda$ is
\emph{monotone} if $s \mathrel{R} t$ implies $C[s] \mathrel{R} C[t]$
for every context $C$. Furthermore, we say that a term
$t$ is a \emph{subterm} of $s$ ($s \subteq t$) if there exists a
context $C$ such that $s = C[t]$ and define the proper subterm
relation $s \subt t$ as $s \subt t$ and $s \neq t$. Since we view
lambda terms as equivalence classes modulo $\alpha$-renaming, the
subterm relation is also defined modulo $\alpha$-renaming. Hence, we
have e.g.~$\lambda x.t \subt t[x/z]$ for a fresh variable $z$.
Now, we are able to define the notion of
higher-order rewriting we are concerned with in this paper.

\begin{definition}
A \emph{rewrite rule} is a tuple $\ell \R r$ with $\ell, r \in \nfbe$
where $\ell$ is not of the form $xs_1 \cdots s_n$,
$\tau(\ell) = \tau(r)$ and $\fv{r} \subseteq \fv{\ell}$. A
\emph{higher-order rewrite system} \textup{(HRS)} is a set of
rewrite rules. Given an \textup{HRS} $\xR$, there is a \emph{rewrite step}
$s \RbR t$ if there exist a rule $\ell \R r \in \xR$, a
substitution $\sigma$ and a context $C$ such that
$s = C[\ell\sigma{\D}] \in \nfbe$ and
$t = C[r\sigma{\D}]{\D}$.
\end{definition}

Note that ${\RbR} \subseteq \nfbe \times \nfbe$ as
$C[\ell\sigma{\D}], C[r\sigma{\D}]{\D} \in \nfbe$ by
definition. Hence, both rules and rewrite steps only consider terms in
their unique $\beta\eta$-normal form whereas matching is performed
modulo $\beta\eta$. In the original definition of HRSs \cite{MN98},
the long $\beta\eta$-normal form based on $\eta$-expansion instead of
$\eta$-reduction is used. Using $\beta\eta$-normal
forms instead of long $\beta\eta$-normal forms leads to a different
rewrite relation and might change the termination behavior of a
given rewrite system (cf.~\exaref{betaEtaLong}).
The reason of our deviation from the standard
definition of Nipkow's HRSs is that higher-order recursive path
orders usually have a hard time dealing with lambdas, and we
wanted keep \thmref{termination} simple by using the same canonical
form for rewriting as well as higher-order reduction orders.
Furthermore, higher-order rewriting on $\beta\eta$-normal forms is
interesting in its own right \cite{J05,JL12}. Note that NHORPO
\cite{JR15} can accommodate any orientation of $\eta$ by adding an
appropriate wrapper around the core order which is only defined
on $\beta\eta$-normal forms. We expect that a similar strategy
works for NCPO, thus making it applicable to the original
formulation of HRSs by Nipkow. Another difference from the usual
definition of HRSs is that we allow rules of non-base type. However,
this is not new \cite{JL12}. In all other aspects, our definition is
equivalent to the original one given in \cite{MN98}.

\begin{example}
\label{exa:betaEtaLong}
Let $\xB = \SET{a}$ and consider the function symbols
$\m{c} \in \xF_a$, $\m{f} \in \xF_{a \to a}$ and
$\m{g} \in \xF_{a \to a \to a}$ as well as the variable
$x \in \xV_a$. The $\beta\eta$-normal term $\m{f}$ cannot be
rewritten using the HRS $\xR$ consisting of the single rule
$\m{f} x \R \m{g} x \m{c}$. However, for its $\eta$-long normal
form $\lambda y. \m{f} y$ we have
$\lambda y. \m{f} y \Rb{\xR} \lambda y. \m{g} y \m{c}$ in the
standard definition of HRSs.
\end{example}

Next, we recall some important definitions about relations which are
used throughout this paper. Given a binary relation $R$, $R^+$ and
$R^*$ denote its transitive and transitive-reflexive closure,
respectively.
The composition of a two binary relations $R$ and $S$ is
defined as follows: $a \mathrel{R} \cdot \mathrel{S} b$ if there
exists an element $c$ such that $a \mathrel{R} c$ and
$c \mathrel{S} b$.
A \emph{preorder} is a reflexive and
transitive relation. Given a relation $>$ we denote its inverse by
$<$ and its reflexive closure by $\geq$. Note that the reflexive
closure of a binary relation $R$ defined as $R \subseteq A \times A$
contains all elements of the set $A$ even when the strict part of $R$
assumes additional properties of the elements in $A$ which belong to
the relation.
A binary relation $R$ on a set $A$ is
\emph{well-founded} if there is no infinite sequence
$a_1 \mathrel{R} a_2 \mathrel{R} \cdots$ where $a_i \in A$ for
$i \in \mathbb{N}$. We say that an HRS $\xR$ is \emph{terminating} if
$\RbR$ is well-founded. We now define the notion of
$\beta\eta$-normal higher-order reduction orders.

\begin{definition}
\label{def:reductionOrder}
A \emph{$\beta\eta$-normal higher-order reduction order} is a pair
$(>,\sqsupset)$ which satisfies the following conditions:
\begin{itemize}
\item
${\sqsupset} \subseteq \Lambda \times \Lambda$ is a well-founded relation,
\item
${\sqsupset}$ is monotone,
\item
${\Rb{\beta}},{\Rb{\eta}} \subseteq {\sqsupset}$,
\item
$s > t$ implies $s\sigma{\D} \sqsupset^+ t\sigma{\D}$ for all
$s, t \in \nfbe$ and substitutions $\sigma$.
\end{itemize}
We often refer to the last condition as
\emph{$\beta\eta$-normal stability}.
An \textup{HRS} $\xR$ is \emph{compatible} with a $\beta\eta$-normal
higher-order reduction order $(>,\sqsupset)$ if $\ell >^+ r$
for all $\ell \R r \in \xR$.
\end{definition}

As in \cite{JR15}, the intuition behind this definition is that
$>$ will be used to orient the rules of HRSs while relying on the
termination proof of its plain variant $\sqsupset$. Despite calling
$(>,\sqsupset)$ an order, we do not demand transitivity of any of its
components. In the context of higher-order rewriting, this is
standard as $\sqsupset$ contains $\beta$-reduction which is
not transitive. By taking the identity substitution, we can see
that $\beta\eta$-normal stability implies ${>} \subseteq {\sqsupset^+}$.
The following theorem shows that $\beta\eta$-normal higher-order
reduction orders can be used to show termination of our flavor of HRSs.

\begin{theorem}
\label{thm:termination}
If an \textup{HRS} $\xR$ is compatible with a
$\beta\eta$-normal higher-order reduction order $(>,\sqsupset)$, then
$\xR$ is terminating.
\end{theorem}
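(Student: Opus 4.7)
The plan is to prove the stronger containment $\RbR \subseteq \sqsupset^+$, from which termination of $\xR$ follows immediately: well-foundedness of $\sqsupset$ carries over to $\sqsupset^+$, so any infinite $\RbR$-sequence would induce an infinite $\sqsupset^+$-chain, a contradiction.

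To establish the containment, I take an arbitrary step $s \RbR t$ witnessed by a rule $\ell \R r \in \xR$, a substitution $\sigma$, and a context $C$, with $s = C[\ell\sigma{\D}] \in \nfbe$ and $t = C[r\sigma{\D}]{\D}$. The $\sqsupset^+$-chain from $s$ to $t$ is assembled in three links. First, compatibility gives $\ell >^+ r$, and iterated application of $\beta\eta$-normal stability pushes this chain through $\sigma$ to yield $\ell\sigma{\D} \sqsupset^+ r\sigma{\D}$. Second, monotonicity of $\sqsupset$, which lifts routinely to $\sqsupset^+$ by induction on chain length, inserts the context $C$ and gives $s = C[\ell\sigma{\D}] \sqsupset^+ C[r\sigma{\D}]$. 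Third, the normalization from $C[r\sigma{\D}]$ to $t$ is a (possibly empty) sequence of $\beta$- and $\eta$-reduction steps; each such step lies in $\sqsupset$ thanks to ${\Rb{\beta}},{\Rb{\eta}} \subseteq {\sqsupset}$ together with monotonicity, so $C[r\sigma{\D}] \sqsupset^* t$. Concatenating the three links yields $s \sqsupset^+ t$.

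The one subtle point I anticipate is the first link: $\beta\eta$-normal stability is stated for a single $>$-step with both endpoints in $\nfbe$, but compatibility provides only $\ell >^+ r$. Propagating stability across the transitive closure requires that intermediate terms of the chain lie in $\nfbe$ so that the preconditions of the stability clause are met; this is consistent with the paper's design of $>$ as a relation on $\beta\eta$-normal forms. Once this is handled, the remaining ingredients---lifting monotonicity to $\sqsupset^+$, promoting the inclusion ${\Rb{\beta}},{\Rb{\eta}} \subseteq {\sqsupset}$ to full $\beta\eta$-reduction via monotonicity, and transferring well-foundedness from $\sqsupset$ to $\sqsupset^+$---are all routine.
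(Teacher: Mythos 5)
Your proposal is correct and follows essentially the same route as the paper's proof: apply $\beta\eta$-normal stability to $\ell >^+ r$ to get $\ell\sigma{\D} \sqsupset^+ r\sigma{\D}$, insert the context by monotonicity of $\sqsupset$, and absorb the final normalization of $C[r\sigma{\D}]$ using ${\Rb{\beta}},{\Rb{\eta}} \subseteq {\sqsupset}$; phrasing it as the containment ${\RbR} \subseteq {\sqsupset^+}$ rather than as a contradiction with an infinite sequence is an inessential difference. Your remark about chaining stability through the intermediate terms of $\ell >^+ r$ is a point the paper glosses over, and your resolution (the intermediates lie in $\nfbe$ since $>$ is designed as a relation on $\beta\eta$-normal forms) is the intended one.
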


\begin{proof}
For a proof by contradiction, consider an infinite rewrite sequence
$s_1 \RbR s_2 \RbR \cdots$. By definition,
$s_1 = C[\ell\sigma{\D}]$ and
$s_2 = C[r\sigma{\D}]{\D}$ for some
$\ell \R r \in \xR$ where $\sigma$ is a substitution and $C$ is a
context such that $C[\ell\sigma{\D}] \in \nfbe$. By
assumption, $\ell >^+ r$, so
$\ell\sigma{\D} \sqsupset^+ r\sigma{\D}$ follows from
$\beta\eta$-normal stability and we obtain
$C[\ell\sigma{\D}] \sqsupset^+ C[r\sigma{\D}]$ by
monotonicity of $\sqsupset$. Finally, $C[(\ell\sigma){\D}] \sqsupset^+
C[(r\sigma){\D}]{\D}$ since $\sqsupset$ contains
$\beta\eta$-reduction. Hence, we can transform the infinite
sequence $s_1 \RbR s_2 \cdots$ into the infinite descending sequence
$s_1 \sqsupset^+ s_2 \cdots$ which contradicts well-foundedness of
$\sqsupset$. Thus, $\xR$ is terminating. \qed
\end{proof}

\section{Ingredients of the Order}
\label{sec:ingredients}

We start by recalling the notion of nonversatile terms from
\cite[Definition 4.1]{JR15}. Intuitively, a term is nonversatile if the
application of any substitution together with the subsequent
$\beta\eta$-normalization only affects its proper subterms. As this
is needed in the inductive proof of $\beta\eta$-normal stability, an
NCPO comparison $s > t$ is only defined for nonversatile $s$. The
sufficient condition for nonversatility given in the subsequent lemma
is an improvement over the one given in \cite[Lemma 4.2]{JR15}.%
\footnote{In fact, subgoal 3 of Example 4.8 in \cite{JR15} cannot be
handled with the corresponding sufficient condition.}

\begin{definition}
The set $\Lambda_\nv \subseteq \nfbe$ of \emph{nonversatile} terms is
defined as follows:
\begin{itemize}
\item
$\xV \cap \Lambda_\nv = \varnothing$,
\item
$f(\bar{t}) \in \Lambda_\nv$ for all $f \in \xF$ and
$\bar{t} \in \Lambda^{\ar{f}}$,
\item
$uv \in \Lambda_\nv$ if
$(uv)\sigma{\D} = u\sigma{\D}v\sigma{\D}$
for all substitutions $\sigma$,
\item
$\lambda x.u \in \Lambda_\nv$ if
$(\lambda x.u)\sigma{\D} = \lambda x.u\sigma{\D}$
for all substitutions $\sigma$.
\end{itemize}
Terms which are not nonversatile are said to be \emph{versatile}.
\end{definition}

\begin{example}
The terms $\m{c} x$ and $\lambda x. \m{f}(y x)$ are nonversatile
while the terms $y x$ and $\lambda x. \m{f}(y x) x$ are versatile
as can be seen by taking the substitution
$\SET{y \mapsto \lambda z. \m{d}}$.
\end{example}

\begin{lemma}
\label{lem:nonversatile}
The following terms are nonversatile if they are $\beta\eta$-normal forms:
\begin{enumerate}[(i)]
\item
applied function symbols $f(\bar{t})$,
\item
applications $uv$ with nonversatile $u$,
\item
abstractions $\lambda x.ux$ where $ux \subteq vw$ implies that
$vw \in \Lambda_\nv$,
\item
abstractions $\lambda x.u$ with $u \neq vx$ where
\begin{itemize}
\item
$u \in \xV \cup \Lambda_\nv$,
\item
if $u = v(w_1w_2)$ then $w_1w_2 \in \Lambda_\nv$,
\item
if $u = v(\lambda y.w)$ then $\lambda y.w \in \Lambda_\nv$.
\end{itemize}
\end{enumerate}
\end{lemma}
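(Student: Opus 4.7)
The plan is to verify, for each of the four cases, that the corresponding clause of the definition of $\Lambda_\nv$ is satisfied. Case (i) is an immediate instance of the second clause of that definition.

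For case (ii), we must establish $(uv)\sigma{\D} = u\sigma{\D}\,v\sigma{\D}$ for every substitution $\sigma$. The key observation is that $u\sigma{\D}$ is never a lambda abstraction: $\beta\eta$-normality of $uv$ rules out $u$ being an abstraction, and nonversatility of $u$ rules out $u$ being a variable, so $u$ falls under one of the two remaining clauses of the definition (applied function symbol or application), and a quick inspection shows that in either subcase $u\sigma{\D}$ again has an applied function symbol or an application at its root. Consequently $u\sigma{\D}\,v\sigma{\D}$ is already in $\beta\eta$-normal form---there is no outer $\beta$-redex, and the outermost constructor is an application, not an $\eta$-redex---so confluence of $\beta\eta$ gives $(uv)\sigma{\D} = u\sigma{\D}\,v\sigma{\D}$.

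For case (iii), instantiating the hypothesis with $vw = ux$ yields $ux \in \Lambda_\nv$, hence $(ux)\sigma{\D} = u\sigma{\D}\,x\sigma{\D}$, which after choosing $x$ fresh for $\sigma$ simplifies to $u\sigma{\D}\,x$; since this is already a normal form, $u\sigma{\D}$ is not a lambda. Then $(\lambda x.ux)\sigma = \lambda x.u\sigma\,x$ normalizes to $\lambda x.u\sigma{\D}\,x$, and the remaining task is to exclude an outer $\eta$-redex, i.e., to show $x \in \fv{u\sigma{\D}}$. Since $\lambda x.ux \in \nfbe$ forces $x \in \fv{u}$, this follows by structural induction on $u$, using the hypothesis applied to every application subterm of $ux$ to prevent occurrences of $x$ from being absorbed by substitution-induced $\beta$-reductions in $u\sigma$.

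Case (iv) is treated in the same spirit: $(\lambda x.u)\sigma$ normalizes to $\lambda x.u\sigma{\D}$, the syntactic constraint $u \neq vx$ excludes an outer $\eta$-redex already in $\lambda x.u$ itself, and the three bulleted conditions together ensure that no such redex is created through substitution. Concretely, when $u$ is a variable, $\sigma$ being away from $x$ prevents $u\sigma{\D}$ from ending in $x$; when $u$ is nonversatile, the root shape of $u\sigma{\D}$ mirrors that of $u$; and the subcases $u = v(w_1 w_2)$ and $u = v(\lambda y.w)$ are handled via the stipulated nonversatility of the argument, which rules out a collapse that would expose $x$ at the right. The principal obstacle across the entire proof is the inductive step in part (iii)---tracking the persistence of the bound variable $x$ through normalization of $u\sigma$---while the remaining verifications amount to routine case distinctions against the four clauses of the definition of $\Lambda_\nv$.
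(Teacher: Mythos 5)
Your proposal follows essentially the same route as the paper's proof: clause-by-clause verification against the definition of $\Lambda_\nv$, with the same three key observations --- that $u\sigma{\D}$ cannot be an abstraction in case (ii), the structural induction showing $x$ remains free in $u\sigma{\D}$ in case (iii), and the case analysis on the argument $w$ ruling out $w\sigma{\D} = x$ in case (iv). The sketches of (iii) and (iv) are compressed but identify exactly the steps the paper spells out, so this is correct and not a genuinely different argument.
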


Besides the usual inference rules on terms, reduction orders
derived from HORPO \cite{JR07} require appropriate orders on
types. The next definition recalls the notion of
admissible type orders from CPO \cite{BJR15}.

\begin{definition}
We define the left (right) argument relation on types $\subt_l$
\textup{($\subt_r$)} as follows: $T \to U \subt_l T$
\textup{($T \to U \subt_r U$)}. An
order $\succ_\xT$ on types is admissible if
\begin{itemize}
\item
${\subt_r} \subseteq {\succ_\xT}$,
\item
${\gtrdot} = ({\succ_\xT} \cup {\subt_l})^+$ is well-founded,
\item
if $T \to U \succ_\xT V$ then $U \succeq_\xT V$ or $V = T \to U'$ with
$U \succ_\xT U'$.
\end{itemize}
Given a type $T$ and $a \in \xB$ we write $a \gtrdot_\xB T$
\textup{($a \geqdot_\xB T$)} if $a \gtrdot b$
\textup{($a \geqdot b$)} for every
$b \in \xB$ occurring in $T$.
\end{definition}

\begin{lemma}[Lemma 2.3 from \cite{BJR15}]
\label{lem:admissibleOrdering}
Given a well-founded order $\succ_\xB$ on base types, let
$\succ_\xT$ be the smallest order on types containing $\succ_\xB$
and $\subt_r$ such that $V \succ_\xT V'$ implies
$U \to V \succ_\xT U \to V'$ for all $U,V,V'$. Then, $>_\xT$ is
admissible.
\end{lemma}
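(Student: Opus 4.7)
The plan is to verify the three clauses of admissibility in the order of increasing difficulty. The first clause, $\subt_r \subseteq {\succ_\xT}$, is immediate: by construction $\succ_\xT$ contains $\subt_r$.

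For the third clause, I would proceed by induction on the derivation of $T \to U \succ_\xT V$ according to how $\succ_\xT$ is inductively generated (from $\succ_\xB$, $\subt_r$, transitive closure, and the congruence rule). The two base cases are easy: membership in $\succ_\xB$ is impossible because $T \to U$ is not a base type, and $(T \to U) \subt_r V$ forces $V = U$, so $U \succeq_\xT V$. The congruence step yields $V = T \to U'$ with $U \succ_\xT U'$ directly. The transitive step $T \to U \succ_\xT W \succ_\xT V$ requires the most care: applying the inductive hypothesis to the first link gives either $U \succeq_\xT W$, whence transitivity immediately delivers $U \succeq_\xT V$; or $W = T \to U'$ with $U \succ_\xT U'$, in which case I apply the inductive hypothesis again to $W \succ_\xT V$ and finish by a similar case analysis, using transitivity of $\succ_\xT$ to collapse nested steps.

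For the second clause—well-foundedness of $\gtrdot = (\succ_\xT \cup \subt_l)^+$—my approach is to find a monotone measure into a well-founded order. I would assign to each type $T$ the multiset $\mu(T)$ of base type occurrences in $T$, so $\mu(a) = \{a\}$ for $a \in \xB$ and $\mu(U \to V) = \mu(U) \uplus \mu(V)$. The key lemma is that both $\succ_\xT$ and $\subt_l$ strictly decrease $\mu$ in the Dershowitz--Manna multiset extension $>_{\mathrm{mul}}$ of $\succ_\xB$. For $\succ_\xT$ this goes by induction on the derivation: the $\succ_\xB$-base decreases a single element, the $\subt_r$-base drops the $\mu(T)$-portion from $\mu(T \to U)$, the congruence step preserves the decrease under multiset union with $\mu(U)$, and transitivity is absorbed by transitivity of $>_{\mathrm{mul}}$. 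For $\subt_l$, $\mu(T \to U) = \mu(T) \uplus \mu(U) \supsetneq \mu(T)$, and a strict superset is strictly larger in $>_{\mathrm{mul}}$. Since $\succ_\xB$ is well-founded, so is $>_{\mathrm{mul}}$ on finite multisets, which transfers well-foundedness to $\gtrdot$.

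The main obstacle I anticipate is the third clause, specifically the transitive case: one has to iterate the inductive hypothesis carefully and repeatedly patch with the transitivity of $\succ_\xT$ and $\succeq_\xT$ to show that both disjuncts of the conclusion are preserved along a chain. The multiset argument for well-foundedness is conceptually standard once the right measure is fixed, the only subtlety being to confirm that the congruence closure is handled by the monotonicity of multiset union.
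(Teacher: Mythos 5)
The paper does not actually prove this lemma---it is imported verbatim from \cite{BJR15} (Lemma 2.3 there) and no proof appears in the appendix---so there is nothing to compare against line by line; your argument has to stand on its own, and it does. The first and third clauses are handled correctly: in particular, the delicate transitive case of the third clause works because the intermediate type $W$, when it falls into the second disjunct, is again an arrow type $T \to U'$ with the \emph{same} domain $T$, so the inductive hypothesis can be re-applied to $W \succ_\xT V$ and the two resulting subcases collapse by transitivity into the required disjunction. For well-foundedness of $\gtrdot$, your measure $\mu(T)$ (the multiset of base-type occurrences) is a clean choice: every generating rule of $\succ_\xT$ strictly decreases $\mu$ in the multiset extension $\gtrmul$ of $\succ_\xB$ (the $\subt_r$ case and the $\subt_l$ case both drop a nonempty sub-multiset, the $\succ_\xB$ case replaces one element by a smaller one, and congruence is absorbed by compatibility of $\gtrmul$ with $\uplus$), so an infinite $\gtrdot$-chain would contradict well-foundedness of $\gtrmul$ on finite multisets. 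Two points worth making explicit if you write this up: (i) $\mu(T) \neq \varnothing$ for every type $T$, which is what makes the $\subt_r$ and $\subt_l$ decreases \emph{strict}; and (ii) irreflexivity of $\succ_\xT$ (needed for it to be an \emph{order} at all) is not separately argued but follows immediately from the same measure. Note also that your measure gives a slightly stronger conclusion than required, since it bounds $\gtrdot$ by a multiset order independently of the arrow structure; this is consistent with the paper's remark that admissible orders may in general place base types above function types, a freedom your particular construction (like that of \lemref{admissibleOrdering}) does not exercise.
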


\begin{example}
Let $\xB = \{a,b,c\}$ where $a \succ_\xB b \succ_\xB c$ and consider
the order $\succ_\xT$ defined in the previous lemma. We have
$a \to b \succ_\xT a \to c$ because $b \succ_\xT c$. However,
$c \succ_\xT b \to b$ does not hold because $b \to b \subt_r b$ and
therefore $c \succ_\xT b \to b \succ_\xT b$ which contradicts
well-foundedness of $\succ_\xT$ together with $b \succ_\xT c$.
\end{example}

Unlike their first-order versions, higher-order recursive path
orders do not enjoy the subterm property because we cannot have
$\m{f}(\m{g}(x)) > x$ in general: If
$\m{f} \in \xF_{a \to a \to a}$ and
$\m{g} \in \xF_{(a \to a) \to a}$ for some $a \in \xB$ then this is
an encoding of the untyped lambda calculus (where $\m{f}$ represents
application and $\m{g}$ represents abstraction) and therefore not
terminating. Thus, if we want to recursively define
$f(\bar{t}) > v$ by $t_i \geq v$ for some $i$, we usually have to check
whether $\tau(t_i) \succeq_\xT \tau(v)$ holds for the given
admissible type order $\succ_\xT$. In particular, this means
that establishing $s > t$ by choosing some $s \subt s'$ and showing
$s' \geq t$ is only possible if we check that there is a weak
decrease in the admissible type order for all intermediate terms
in the recursive definition of $\subt$. CPO extends HORPO by allowing
these checks to be dismissed for the special case of accessible
subterms which are determined by type-related properties. To this
end, we start by introducing the concepts of (positive and negative)
base type positions in a type taken from \cite[Definition 7.2]{BJR15}.

\begin{definition}
The sets $\posp{T}$, $\posn{T}$ and $\posof{a}{T}$ of
\emph{positive base type positions},
\emph{negative base type positions} and \emph{positions of $a \in \xB$} in
a type $T \in \xT$ are inductively defined as sets of finite strings over
$\SET{1,2}$ (where $\epsilon$ denotes the empty string) as follows:
\begin{align*}
\posp{a} &= \posof{a}{a} = \SET{\epsilon} \qquad
\posn{a} = \varnothing \qquad
\posof{a}{b} = \varnothing ~ \text{if $a \neq b$} \\
\posof{a}{U \to V} &= \SET{1p \mid p \in \posof{a}{U}} \cup
\SET{2p \mid p \in \posof{a}{V}} \\
\posp{U \to V} &= \SET{1p \mid p \in \posn{U}} \cup
\SET{2p \mid p \in \posp{V}} \\
\posn{U \to V} &= \SET{1p \mid p \in \posp{U}} \cup
\SET{2p \mid p \in \posn{V}}
\end{align*}
\end{definition}

\begin{example}
Let $\xB = \SET{a,b}$ and consider $T = (a \to b) \to (a \to b)$. We
have $\posp{T} = \SET{11,22}$ and
$\posn{T} = \SET{12,21}$. Furthermore, $\posof{a}{T} = \SET{11,21}$
and $\posof{b}{T} = \SET{12,22}$.
\end{example}

Next, we define the notions of accessible arguments of function
symbols and basic base types as given in
\cite[Definitions 7.3 \& 7.4]{BJR15}.

\begin{definition}
With every $f \in \xF_{T_1 \to \cdots \to T_n \to a}$ we
associate a set $\acc{f}$ of \emph{accessible} arguments of
$f$ such that $i \in \acc{f}$ implies $a \geqdot_\xB T_i$
and $\posof{a}{T_i} \subseteq \posp{T_i}$ for all $1 \leq i \leq n$.
Furthermore, we say that $a \in \xB$ is \emph{basic} if the following
conditions hold:
\begin{itemize}
\item
$T \prec_\xT a$ implies that $T$ is a basic base type,
\item
for all $f \in \xF_{T_1 \to \cdots \to T_n \to a}$ and
$i \in \acc{f}$, $T_i = a$ or $T_i$ is a basic base type.
\end{itemize}
\end{definition}

Note that the condition $T \prec_\xT a$ is straightforward to check with
the admissible type order from \lemref{admissibleOrdering} as only
base types can be smaller than base types. In general, it is possible to
have base types which are bigger than function types while retaining
admissibility of the type order \cite{BJR15}.

Next, we define the notion of nonversatilely accessible subterms.
The definition closely follows the one given in
\cite[Definition 7.5]{BJR15} but with appropriate restrictions regarding
nonversatility.

\begin{definition}
We write $s \bsubt t$ if $t$ is a subterm of $s \in \Lambda_\nv$
reachable by nonversatile intermediate terms in the recursive definition
of $\subt$, $t$ is of basic base type and $\fv{t} \subseteq \fv{s}$, i.e.,
no bound variables have become free. Furthermore, $s \asubt t$ if there
are $a \in \xB$, $f \in \xF_{T_1 \to \cdots \to T_n \to a}$,
$s_i \in \Lambda_{T_i}$ for $1 \leq i \leq n$ and $j \in \acc{f}$
such that
$s = f(s_1,\dots,s_{\ar{f}}) s_{\ar{f}+1} \cdots s_n$ and
$s_j \asubteq t$. Here,
${\bsubt}, {\asubt} \subseteq \nfbe \times \nfbe$. A term $t$
is \emph{nonversatilely accessible} in a nonversatile term $s$ if
$s \bsubt t$ or $s \asubt t$.
\end{definition}

Finally, we define
the notion of structurally smaller terms with respect to a set
of variables $X$. It is an important ingredient of CPO with
accessible subterms as it facilitates a way of using the set $X$ with
which the order is parameterized in places where it would otherwise
lead to non-termination. Once again, we can immediately use the
corresponding definition given in \cite[Definition 7.8]{BJR15}.

\begin{definition}
Let $X$ be a finite set of variables. We say that a term $t$ is
\emph{structurally smaller} than a term $s$, written
$s \structsm{X} t$, if there are $a \in \xB$,
$\seq[k]{x} \in X$ and $u \in \Lambda$ such that
$\tau(s) = \tau(t) = a$, $t = u x_1 \cdots x_k$, $s \asubt u$ and
$\posof{a}{\tau(x_i)} = \varnothing$ for all $1 \leq i \leq k$.
Here, $\structsm{X} \subseteq \nfbe \times \Lambda$.
\end{definition}

Note that if $s \structsm{X} t$ then $t$ may not be in
$\beta\eta$-normal form. The following important result is
required for $\beta\eta$-normal stability and explains why nonversatility
is needed in $\bsubt$ but not in $\asubt$ and $\structsm{X}$ where it is
guaranteed by the original definition.

\begin{lemma}
\label{lem:subtNormStability}
The following statements hold:
\begin{enumerate}[(i)]
\item
If $s \bsubt t$ then $s\sigma{\D} \bsubt t\sigma{\D}$
for all substitutions $\sigma$.
\item
If $s \asubt t$ then $s\sigma{\D} \asubt t\sigma{\D}$
for all substitutions $\sigma$.
\item
If $s \structsm{X} t$ then $s\sigma{\D} \structsm{X} t'$ for some
$t' \Rnbe t\sigma{\D}$ whenever $\sigma$ is away from $X$.
\item
If $s \structsm{X} t$ and $t \in \Lambda_\nv$ then
$s\sigma{\D} \structsm{X} t\sigma{\D}$ for all
substitutions $\sigma$ away from $X$.
\end{enumerate}
\end{lemma}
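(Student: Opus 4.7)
The plan is to prove the four statements sequentially, exploiting the fact that $\Lambda_\nv$ is engineered so that substitution followed by $\beta\eta$-normalization commutes with the outermost constructor of a nonversatile term. For (i), I would induct on the derivation of $s \bsubt t$, walking the chain of nonversatile intermediates. The three non-variable clauses in the definition of $\Lambda_\nv$ supply the inductive step: at an applied function symbol $f(\bar{v})$, normalization cannot touch the head (neither $\beta$ nor $\eta$ has a function-symbol redex root), so $s\sigma{\D} = f(\bar{v}\sigma{\D})$; at a nonversatile application $uv$, the definition yields $(uv)\sigma{\D} = u\sigma{\D}\,v\sigma{\D}$; and likewise at a nonversatile abstraction. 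A small side induction using these clauses together with \lemref{nonversatile} shows that $\sigma{\D}$ preserves $\Lambda_\nv$, so each intermediate of the transported chain remains nonversatile. The basic-base-type condition on $t\sigma{\D}$ is preserved since $\sigma$ and $\beta\eta$-reduction preserve types, and $\fv{t\sigma{\D}} \subseteq \fv{s\sigma{\D}}$ follows from applying the same substitution on both sides.

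For (ii), the head of $s$ is a function symbol $f$ of fixed arity, so no $\beta$- or $\eta$-redex appears at the head of $s\sigma$, and $s\sigma{\D} = f(s_1\sigma{\D},\dots,s_{\ar{f}}\sigma{\D})\,s_{\ar{f}+1}\sigma{\D}\cdots s_n\sigma{\D}$. The recursive premise $s_j \asubteq t$ yields $s_j\sigma{\D} \asubteq t\sigma{\D}$ (by the IH in the strict case and trivially in the equality case), and pairing this with $f$ at the same accessible index $j \in \acc{f}$ reconstitutes $s\sigma{\D} \asubt t\sigma{\D}$. For (iii), part (ii) gives $s\sigma{\D} \asubt u\sigma{\D}$; setting $t' = u\sigma{\D}\,x_1\cdots x_k$ and using that $\sigma$ is away from $X$ so the $x_i$ are untouched, $t\sigma = u\sigma\,x_1\cdots x_k$ $\beta\eta$-reduces to $t'$ and then on to $t\sigma{\D}$, so $t' \Rnbe t\sigma{\D}$. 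The type and position conditions on the $x_i$ transfer verbatim, and $s\sigma{\D} \structsm{X} t'$ follows.

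For (iv), the additional hypothesis $t \in \Lambda_\nv$ should upgrade (iii) to $t' = t\sigma{\D}$, i.e., show that $u\sigma{\D}\,x_1\cdots x_k$ is already in $\beta\eta$-normal form. The case $k = 0$ is immediate. For $k \geq 1$, the crucial step is to prove that nonversatility of $t$ forces the leftmost atom of $u$ to be a function symbol: otherwise $u = h\,v_1\cdots v_m$ with a free variable $h$, and choosing $\sigma = \{h \mapsto \lambda z_1\cdots z_{m+k}.y\}$ with $y$ a fresh variable of the target base type makes $t\sigma$ $\beta$-reduce all the way to $y$, whereas the nonversatility equation $t\sigma{\D} = (u x_1\cdots x_{k-1})\sigma{\D}\cdot x_k\sigma{\D}$ demands $t\sigma{\D}$ be an application with $x_k$ on the right, a contradiction. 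Since $u \in \nfbe$ cannot itself be a $\lambda$-abstraction (else $t$ would contain a $\beta$-redex), the head of $u$ must be a function symbol, so $u\sigma{\D}$ inherits that head and $u\sigma{\D}\,x_1\cdots x_k \in \nfbe$ follows. I expect this head analysis to be the main obstacle, as it hinges on picking the right witness substitution to rule out the variable-head case; once that is settled, $t' = t\sigma{\D}$ and (iv) follows from (iii).
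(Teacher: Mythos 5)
Your parts (i)--(iii) coincide with the paper's proof: induction on the derivation, with the nonversatility clauses (respectively the rigid function-symbol head in (ii)) making substitution followed by normalization commute with the outermost constructor, and (iii) obtained from (ii) by reattaching $x_1,\dots,x_k$, which $\sigma$ leaves untouched. One caveat there: the ``side induction'' showing that $(\cdot)\sigma{\D}$ preserves $\Lambda_\nv$ (needed so that the transported chain really witnesses $\bsubt$) does not follow from \lemref{nonversatile}, whose conditions are only sufficient; it follows from instantiating the nonversatility equation of each intermediate at a composed substitution, using $u\sigma{\D}\rho{\D} = u(\sigma\rho){\D}$. The paper leaves this point implicit, so making it explicit is welcome, but credit it to the right lemma. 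For (iv) you take a genuinely different route. The paper argues in one line that $t \in \Lambda_\nv$ forces $u \in \Lambda_\nv$ and $u$ not an abstraction, whence $u\sigma{\D}$ is not an abstraction and $u\sigma{\D}\,x_1\cdots x_k$ is already a normal form. You instead perform a head analysis on $u$: an abstraction head is excluded because $t \in \Lambda_\nv \subseteq \nfbe$ while $u x_1$ would be a $\beta$-redex, and a variable head $h$ is excluded by the witness substitution $\{h \mapsto \lambda z_1 \cdots z_{m+k}.y\}$, which collapses $t\sigma{\D}$ to the atom $y$ and contradicts the nonversatility equation of the outermost application of $t$. This is more work but arguably more informative: the paper's assertion that $u \in \Lambda_\nv$ is precisely the step that deserves justification, and your witness substitution supplies it (it is the same trick that shows head-variable terms are versatile in general). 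Both arguments converge on the same fact --- the head of $u\sigma{\D}$ is rigid, so appending the $x_i$ creates no $\beta$- or $\eta$-redex --- and both are correct.
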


\section{The Beta-Eta-Normal Computability Path Order}
\label{sec:order}

First, we briefly recapitulate the definition of the multiset and
lexicographic extension of orders. Given an order $>$, let
$(\seq{s}) >_{\lex} (\seq[m]{t})$ if there exists an
$i \leq \minrel{n}{m}$ such that $s_i > t_i$ and $s_j = t_j$ for all
$j < i$. Given two multisets $M$ and $N$ we write $M \gtrmul N$ if
$N = (M \setminus X) \uplus Y$ where $\varnothing \neq X \subseteq M$
and for all $y \in Y$ there exists an $x \in X$ such that $x > y$. It
is well-known that these extensions preserve well-foundedness.

In the following, let $\succ_\xT$ be an admissible order on types
and $\succsim_\xF$ a preorder on $\xF$ called \emph{precedence}
with a well-founded strict part
${\succ_\xF} = {\succsim_\xF} \setminus {\precsim_\xF}$ and the
equivalence relation
${\simeq_\xF} = {\succsim_\xF} \cap {\precsim_\xF}$. Furthermore, for
every $f \in \xF$ we fix a status $\stat{f} \in \SET{\mul,\lex}$
such that symbols equivalent in $\simeq_\xF$ have the same
status. We also assume that $\xF$ is partitioned into sets $\xFb$
and $\xFs$ of \emph{big} and \emph{small} symbols such that the
following conditions hold: $\xFb \cap \xFs = \varnothing$,
if $f \succsim_\xF g$ and $g \in \xFb$ then $f \in \xFb$, and
whenever $f \in \xF_{T_1 \to \cdots T_n \to a} \cap \xFs$ then
\begin{itemize}
\item \smallskip
$\ar{f} = n$ implies $a \geqdot_\xB T_i$ and
$\spos{a}{T_i} = \varnothing$ for all $1 \leq i \leq n$
where $\spos{a}{\cdot}$ is defined in \appref{spos},
\item \smallskip
$\ar{f} < n$ implies $\acc{f} = \varnothing$
as well as $a \geqdot_\xB T_i$ and
$T_{\ar{f}+1} \to \dots \to T_n \to a \geqdot T_i$ for all
$1 \leq i \leq \ar{f}$.
\end{itemize}

We are now able to lift CPO with accessible subterms and small
symbols \cite{BJR15} to an appropriate component of a
$\beta\eta$-normal higher-order reduction order.

\begin{definition}
Given a finite set $X$ of variables, the order
$>^X \subseteq \nfbe \times \nfbe$ is inductively defined in
\figref{ncpo} where we implicitly assume $s \in \Lambda_\nv$ whenever $s >^X t$.
Furthermore, $s >_\tau^X t$ if $s >^X t$
and $\tau(s) \succeq_\xT \tau(t)$, and $>$ \textup{($>_\tau$)}
is a shorthand for $>^\varnothing$
\textup{($>_\tau^\varnothing$)}.
\end{definition}

\begin{figure}[t]
\centering
\begin{tabular}{r@{~~}l}
\C{\xFb{\subt}} &
$f(\bar{t}) >^X v$ if $f \in \xFb$ and
$t_i \bsubteq \cdot \asubteq \cdot \geq_\tau v$ for some $i$
\smallskip \\
\C{\xFb{=}} &
$f(\bar{t}) >^X g(\bar{u})$ if $f \in \xFb$,
$f \simeq_\xF g$, $f(\bar{t}) >^X u_i$ for all $i$ and \\
& \quad $\bar{t} \mathrel{({>_\tau} \cup
{\structsm{X} \cdot \geq_\tau})_{\stat{f}}} \bar{u}$
\smallskip \\
\C{\xFb{\succ}} &
$f(\bar{t}) >^X g(\bar{u})$ if $f \in \xFb$,
$f \succ_\xF g$ and $f(\bar{t}) >^X u_i$ for all $i$
\smallskip \\
\C{\xFb@} &
$f(\bar{t}) >^X uv$ if $f \in \xFb$,
$f(\bar{t}) >^X u$ and $f(\bar{t}) >^X v$
\smallskip \\
\C{\xFb\lambda} &
$f(\bar{t}) >^X \lambda y.v$ if $f \in \xFb$,
$f(\bar{t}) >^{X \cup \{z\}} v[y/z]$, $\tau(y) = \tau(z)$ and
$z$ fresh
\smallskip \\
\C{\xFb\xV} &
$f(\bar{t}) >^X y$ if $f \in \xFb$ and $y \in X$
\smallskip \\
\C{@{\subt}} &
$tu >^X v$ if $t \geq^X v$ or $u \geq_\tau^X v$
\smallskip \\
\C{@{=}} &
$tu >^X t'u'$ if $t = t'$ and $u >^X u'$ or $tu >_@^X t'$ and
$tu >_@^X u'$ where \\
& \quad $tu >_@^X v$ if $t >_\tau^X v$ or $u \geq_\tau^X v$ or
$tu >_\tau^X v$
\smallskip \\
\C{@\lambda} &
$tu >^X \lambda y.v$ if $tu >^X v[y/z]$, $\tau(y) = \tau(z)$ and
$z$ fresh
\smallskip \\
\C{@\xFs} &
$tu >^X f(\bar{v})$ if $f \in \xFs$ and
$t u >_\tau^X v_i$ for all $i$
\smallskip \\
\C{@\xV} &
$tu >^X y$ if $y \in X$
\smallskip \\
\C{\lambda{\subt}} &
$\lambda x.t >^X v$ if $t[x/z] \geq_\tau^X v$, $\tau(x) = \tau(z)$ and
$z$ fresh
\smallskip \\
\C{\lambda{\subt}\eta} &
$\lambda x.t >^X v$ if $t[x/z] \geq_\tau^X vz$, $\tau(x) = \tau(z)$ and
$z$ fresh
\smallskip \\
\C{\lambda{=}} &
$\lambda x.t >^X \lambda y.v$ if $t[x/z] >^X v[y/z]$,
$\tau(x) = \tau(y) = \tau(z)$ and $z$ fresh
\smallskip \\
\C{\lambda{\neq}} &
$\lambda x.t >^X \lambda y.v$ if $\lambda x.t >^X v[y/z]$,
$\tau(x) \neq \tau(y) = \tau(z)$ and $z$ fresh
\smallskip \\
\C{\lambda\xFs} &
$\lambda x.t >^X f(\bar{v})$ if $f \in \xFs$ and
$\lambda x.t >_\tau^X v_i$ for all $i$
\smallskip \\
\C{\lambda\xV} &
$\lambda x.t >^X y$ if $y \in X$
\smallskip \\
\C{\xFs{\subt}} &
$f(\bar{t}) >^X v$ if $f \in \xFs$ and $t_i \geq_\tau v$
for some $i$
\smallskip \\
\C{\xFs{=}} & 
$f(\bar{t}) >^X g(\bar{u})$ if $f \in \xFs$,
$f \simeq_\xF g$,
$f(\bar{t}) >_\tau^X u_i$ for all $i$ and
$\bar{t} \mathrel{(>_\tau)_{\stat{f}}} \bar{u}$
\smallskip \\
\C{\xFs{\succ}} &
$f(\bar{t}) >^X g(\bar{u})$ if
$f \in \xFs$, $f \succ_\xF g$ and
$f(\bar{t}) >_\tau^X u_i$ for all $i$
\smallskip \\
\C{\xFs@} &
$f(\bar{t}) >^X uv$ if $f \in \xFs$,
$f(\bar{t}) >_\tau^X u$ and $f(\bar{t}) >_\tau^X v$
\smallskip \\
\C{\xFs\xV} & $f(\bar{t}) >^X y$ if $f \in \xFs$ and $y \in X$
\end{tabular}
\caption{Rules of NCPO.}
\label{fig:ncpo}
\end{figure}

Note that ${>^X}$ is well-defined by induction on the tuple $(s,t)$
with respect to the well-founded relation $({\subt},{\subt})_\lex$. It
is defined like CPO with accessible subterms and small symbols
\cite{BJR15} but with the restriction to terms in $\beta\eta$-normal
form where we additionally require that the first argument of $>^X$
as well as the first argument and all intermediate terms in the
recursive definition of the subterm relation $\bsubt$ are
nonversatile. Furthermore, the rules \C{@\beta} and \C{\lambda\eta}
which orient $\beta$- and $\eta$-reduction, respectively, have been
removed. We also weakened the rule \C{\xFs{=}} by disallowing
$\structsm{X} \cdot \geq_\tau$ in the comparison of the arguments as
used in \C{\xFb{=}}. Finally, we added the rule \C{\lambda{\subt}\eta}
which is inspired by rule (11) from NHORPO \cite{JR15}. We refer to
$>_\tau$ as the $\beta\eta$-normal computability path order
\textup{(NCPO)} and use the symbol $\sqsupset$ with the same
decorations as $>$ to denote \textup{CPO} with accessible subterms
and small symbols.
We illustrate the definition on a number of examples and emphasize
some differences between NCPO and NHORPO together with the
transformation technique of neutralization. For our purposes, it is
sufficient to know that the definition of NHORPO is similar to the
one of NCPO without the extensions of accessible subterms and small
symbols but with fewer rules, without a set $X$ of variables and a
weak decrease in the admissible type order in each recursive
invocation. Furthermore, the transformation technique of
neutralization maps terms to terms with the same termination behavior.
The goal is to transform terms which are beyond the scope of NHORPO
into terms which NHORPO can deal with.
In particular, the transformation may apply a function symbol's
arguments of nonbase type to special terms and apply $\beta$-reduction
in order to minimize the number of lambda abstractions.
A detailed definition can be found in \cite{JR15}.

\begin{example}
\label{exa:diff}
In this first example, we use the rule \C{\lambda{\subt}\eta} which is
not part of CPO to prove termination of the symbolic differentiation
example from \cite{JR15}.
Let $\xB = \SET{\m{r}}$ and consider the function symbols
$\m{sin}, \m{cos} \in \xF_{\m{r} \to \m{r}}$,
$\m{diff} \in \xF_{(\m{r} \to \m{r}) \to \m{r} \to \m{r}}$ and
${+}, {\times} \in
\xF_{(\m{r} \to \m{r}) \to (\m{r} \to \m{r}) \to \m{r} \to \m{r}}$ as
well as the variables $x \in \xV_{\m{r}}$ and
$F, G \in \xF_{\m{r} \to \m{r}}$. Furthermore,
$\ar{\m{sin}} = \ar{\m{cos}} = \ar{\m{diff}} = 1$ and
$\ar{+} = \ar{\times} = 2$. We will use infix notation for $+$ and
$\times$. The HRS $\xR$ defining the symbolic differentiation of
$\m{sin}$ and $\times$ is defined by the following two rules:
\begin{align*}
\m{diff}(\lambda x.\m{sin}(F x))
&\R (\lambda x.\m{cos}(F x)) \times \m{diff}(F) \\
\m{diff}(F \times G) &\R (\m{diff}(F) \times G) + (F \times \m{diff}(G))
\end{align*}
For the termination proof with
NCPO, all function symbols can be big with multiset status and we
will use the precedence
$\m{diff} \succ_\xF \m{sin}, \m{cos}, {+}, {\times}$. Note that all
subterms of left-hand sides except for variables and the
application $F x$ are nonversatile. Since
$\xR$ is an HRS, $\tau(\ell) = \tau(r)$ for all $\ell \R r \in \xR$,
so we only have to check $\ell > r$ for all $\ell \R r \in \xR$. 
For the first rule, we apply
\C{\xFb{\succ}} to get the two proof obligations
$\m{diff}(\lambda x.\m{sin}(F x)) > \lambda x.\m{cos}(F x)$ and
$\m{diff}(\lambda x.\m{sin}(F x)) > \m{diff}(F)$. For the former
one, we proceed by \C{\xFb\lambda}, \C{\xFb{\succ}} and \C{\xFb@}
to obtain the subgoals
$\m{diff}(\lambda x.\m{sin}(F x)) >^{\{z\}} F$ and
$\m{diff}(\lambda x.\m{sin}(F x)) >^{\{z\}} z$. We can directly
resolve the second subgoal with \C{\xFb\xV}. For the first goal,
applying \C{\xFb{\subt}} and then \C{\lambda{\subt}\eta} yields the
goal $\m{sin}(F y) > F y$ which can be handled by \C{\xFb{\subt}}.
For the remaining proof obligation
$\m{diff}(\lambda x.\m{sin}(F x)) > \m{diff}(F)$ we apply
\C{\xFb{=}} and \C{\lambda{\subt}\eta} to obtain the goal
$\m{sin}(F x) >_\tau F x$ which is again handled by
\C{\xFb{\subt}}.%
\footnote{In \cite{JR15}, the same reasoning is used to
handle this subcase but an application of the rule corresponding
to \C{\lambda{\subt}\eta} is not allowed by their definition as
$F$ is a variable. Furthermore, as already mentioned,
$\lambda x.\m{sin}(F x)$ is nonversatile by our sufficient
condition but not by the one given in \cite{JR15}.}
Now consider
the second rule. From two applications of \C{\xFb{\succ}} we obtain
the proof obligations $\m{diff}(F \times G) > \m{diff}(F)$,
$\m{diff}(F \times G) > G$,
$\m{diff}(F \times G) > F$ and
$\m{diff}(F \times G) > \m{diff}(G)$ which can be resolved by two
applications of \C{\xFb{\subt}} or \C{\xFb{=}} followed by one
application of \C{\xFb{\subt}}, respectively.
\end{example}

Note that the argumentation given for the last rule also works
for NHORPO as can be confirmed by both our implementation of NHORPO as
well as the prototype implementation linked from \cite{JR15}.
Hence, NHORPO does not need neutralization in order to prove
termination of the preceding example even though this is claimed in
\cite{JR15}. However, as already mentioned, the original
implementation of NHORPO cannot orient the first rule as the
used sufficient condition for nonversatility is too weak.
Interestingly, neutralization can make up for that, so the proof
checked by the original implementation is correct but for different
reasons than the ones given in \cite{JR15}. The following example
shows that NCPO can prove termination of systems where NHORPO
succeeds only if neutralization is employed.

\begin{example}
\label{exa:nnf}
In this example we use accessible subterms to prove
termination of the computation of negation normal forms of
formulas in first-order logic in the framework of $\beta\eta$-normal
higher-order rewriting using NCPO. We need to distinguish between
terms and formulas, so let the set of base types under consideration
be $\xB = \SET{t,f}$. Furthermore, we consider the logical
connectives represented by the function symbols
${\lnot} \in \xF_{f \to f}$, ${\land}, {\lor} \in \xF_{f \to f \to f}$
and ${\forall}, {\exists} \in \xF_{(t \to f) \to f}$
which are all considered to be big. We set
$\ar{\lnot} = \ar{\forall} = \ar{\exists} = 1$ and
$\ar{\land} = \ar{\lor} = 2$ and allow us to use syntactic sugar in
writing as few parentheses as possible by establishing that the unary
function symbols bind stronger than the binary function symbols.
Furthermore, we use infix notation for $\land$ and $\lor$.
As variables, we will use $P, Q \in \xV_f$ and $R \in \xV_{t \to f}$.
Hence, the HRS $\xR$ for the computation of negational normal forms
in first-order logic is given as the following set of rules:
\begin{align*}
\lnot \lnot P &\R P &
\lnot (P \land Q) &\R \lnot P \lor \lnot Q &
\lnot \forall R &\R \exists (\lambda x.\lnot (R x)) \\
& &
\lnot (P \lor Q) &\R \lnot P \land \lnot Q &
\lnot \exists R &\R \forall (\lambda x.\lnot (R x))
\end{align*}
Note that all non-variable subterms of the left-hand sides are
nonversatile. We choose $f \succ_\xT t$ and let all function
symbols have multiset status. Furthermore, set
${\lnot} \succ_\xF {\land}, {\lor}, {\forall}, {\exists}$. We
have $\lnot \lnot P > P$ by applying \C{\xFb{\subt}} twice.
For $\lnot (P \land Q) > \lnot P \lor \lnot Q$, applying
\C{\xFb{\succ}} yields the subgoals 
$\lnot (P \land Q) > \lnot P$ and
$\lnot (P \land Q) > \lnot Q$ which are handled by \C{\xFb{=}} since
$P \land Q >_\tau P, Q$ by \C{\xFb{\subt}}. We obtain
$\lnot (P \lor Q) > \lnot P \land \lnot Q$ in the same way. This
leaves us with establishing
$\lnot \forall R > \exists (\lambda x.\lnot (R x))$ as the final rule
can again be oriented with the same strategy. By applying
\C{\xFb{\succ}}, \C{\xFb\lambda} and \C{\xFb{=}} we obtain
$\forall R \mathrel{{>_\tau} \cup {\structsm{\SET{x}} \cdot \geq_\tau}}
R x$ as proof obligation. Note that
$\forall R >_\tau R x$ is impossible as $x$ does not occur in the
left-hand side. However, we can establish
$\forall R \structsm{\SET{x}} R x$
which is enough to fulfill the proof obligation. Since
$f \,\geqdot_\xB\, t \to f$ and
$\SET{2} = \posof{f}{t \to f} \subseteq \posp{t \to f} = \SET{2}$ we can
set $1 \in \acc{\forall}$. Hence, $\forall R \asubt R$. Furthermore,
$\tau(\forall R) = \tau(R x) = f$ and $\posof{f}{t} = \varnothing$,
so $\forall R \structsm{\SET{x}} R x$ as desired.
\end{example}

\begin{example}
\label{exa:mapInc}
In this example, we demonstrate the usefulness of small function
symbols by proving termination of the map function together with a
function that increments lists of natural numbers in successor
notation. This is a slightly modified
subsystem of \texttt{AotoYamada\_05\_\_014}
from the termination problem database.%
\footnote{\url{https://github.com/TermCOMP/TPDB}}
Proving termination of the full system for plain higher-order rewriting
needs the transitive closure of CPO as shown in
\cite[Example 8.20]{BJR15}, but this does not help in our setting as the
``middle term'' is not in $\beta\eta$-normal form.%
\footnote{Actually, our modification also has to be applied to
\cite[Example 8.20]{BJR15} because admissible type orders
as defined in \cite{BJR15} do not support equivalent base types.}
Consider the
set of base types $\xB = \SET{\m{a},\m{b}}$ as well as the function
symbols $\m{0} \in \xF_{\m{b}}$, $\m{nil} \in \xF_{\m{a}}$,
$\m{s} \in \xF_{\m{b} \to \m{b}}$,
$\m{plus} \in \xF_{\m{b} \to \m{b} \to \m{b}}$,
$\m{inc} \in \xF_{\m{a} \to \m{a}}$,
$\m{map} \in \xF_{(\m{b} \to \m{b}) \to \m{a} \to \m{a}}$ and
$\m{cons} \in \xF_{\m{b} \to \m{a} \to \m{a}}$. We set
$\ar{\m{0}} = \ar{\m{nil}} = 0$,
$\ar{\m{s}} = \ar{\m{plus}} = \ar{\m{inc}} = 1$ and
$\ar{\m{map}} = \ar{\m{cons}} = 2$. We will use the variables
$x, y \in \xF_{\m{b}}$, $v \in \xF_{\m{a}}$ and
$F \in \xV_{\m{b} \to \m{b}}$. Consider the HRS $\xR$ consisting
of the following rules:
\begin{align*}
\m{plus}(\m{0})\: x &\R x &
\m{plus}(\m{s}(y))\: x &\R \m{s}(\m{plus}(y)\: x) \\
\m{map}(F,\m{nil}) &\R \m{nil} &
\m{map}(F,\m{cons}(x,v)) &\R \m{cons}(F x,\m{map}(F,v)) \\
\m{inc}(v) &\R \m{map}(\m{plus}(\m{s}(\m{0})),v)
\end{align*}
We choose $\m{a} \succ_\xT \m{b}$,
consider all function symbols except for $\m{s}$ to be big and
let them all have multiset status. The used precedence is
$\m{inc} \succ_\xF \m{map} \succ_\xF \m{cons}, \m{nil}$ and
$\m{plus} \succ_\xF \m{s}$. Accessible arguments are not needed.
Note that all non-variable subterms of the left-hand sides are
nonversatile. The first rule follows directly from \C{@{\subt}}.
For the second rule, we apply \C{@\xFs} to obtain the subgoal
$\m{plus}(\m{s}(y))\: x >_\tau \m{plus}(y)\: x$. We proceed by
\C{@{=}} and choose the proof obligations
$\m{plus}(\m{s}(y)) >_\tau \m{plus}(y)$ as well as $x \geq_\tau x$. The
second one is obviously true while the first rule can be resolved
by \C{\xFb{=}} and \C{\xFs{\subt}}. The third rule follows
directly from \C{\xFb{\subt}}. For the fourth rule, we apply
\C{\xFb{\succ}} to obtain the subgoal
$\m{map}(F,\m{cons}(x,v)) > F x$ which follows from \C{\xFb@} and
three applications of \C{\xFb{\subt}} as well as the subgoal
$\m{map}(F,\m{cons}(x,v)) > \m{map}(F,v)$ which follows from
\C{\xFb{=}} and \C{\xFb{\subt}}. Finally, the fifth rule can be
oriented by applying \C{\xFb{\succ}} four times as well as
\C{\xFb{\subt}} once.
\end{example}

The previous example cannot be handled by NHORPO (with or without
neutralization) as the rule for the recursive case of $\m{plus}$
needs small symbols which are neither part of NHORPO nor
added by neutralization.

\section{Correctness Proof}
\label{sec:proof}

We start by a technical result for CPO with accessible subterms
which is needed for establishing $\beta\eta$-normal stability of
$(>_\tau,\sqsupset_\tau)$. In particular, it states an important
connection between $\structsm{X}$ and $\sqsupset^X$.

\begin{lemma}
\label{lem:StructSmImpliesSuccX}
Let $X$ be a finite set of variables and consider the term
$f(\seq{t})$ with $f \in \xFb$. If
$t_i \structsm{X} \cdot \sqsupseteq_\tau v$ for some $1 \leq i \leq n$
then $f(\seq{t}) \sqsupset^X v$.
\end{lemma}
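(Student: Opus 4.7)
The plan is to unfold $\structsm{X}$ and then assemble a CPO derivation using the $\xFb$-rules. From $t_i \structsm{X} w$ with $w \sqsupseteq_\tau v$ we extract $a \in \xB$, variables $x_1, \ldots, x_k \in X$, and a term $u$ such that $w = u\, x_1 \cdots x_k$, $t_i \asubt u$, $\tau(w) = a$, and $\posof{a}{\tau(x_j)} = \varnothing$ for each~$j$.

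As a first step I would establish the auxiliary claim $f(\bar{t}) \sqsupset^X w$ by induction on $k$. The base case $k = 0$ gives $w = u$, so rule \C{\xFb{\subt}} applies directly via $t_i \bsubteq t_i$, $t_i \asubt u$, and $u \sqsupseteq_\tau u$. For the inductive step, rule \C{\xFb@} reduces $f(\bar{t}) \sqsupset^X u\, x_1 \cdots x_k$ to the two subgoals $f(\bar{t}) \sqsupset^X u\, x_1 \cdots x_{k-1}$ (discharged by the IH) and $f(\bar{t}) \sqsupset^X x_k$ (discharged by \C{\xFb\xV}, since $x_k \in X$).

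If $w = v$ we are done; otherwise $w \sqsupset_\tau v$, and a secondary induction on this derivation is needed. In each case the shape of $v$ determines the matching $\xFb$-rule---\C{\xFb@} for an application, \C{\xFb\lambda} for an abstraction, \C{\xFb{=}} or \C{\xFb{\succ}} for $g(\bar{u})$, and \C{\xFb\xV} for a variable in $X$---and the subgoals of that rule are supplied by the induction hypothesis applied to the CPO subgoals that the chosen rule produces.

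The main obstacle lies in this secondary induction: rules such as \C{@{\subt}} and the auxiliary $\sqsupset_@^X$ appearing inside \C{@{=}} reduce the comparison not to $w$ itself but to shorter prefixes $u\, x_1 \cdots x_j$ with $j < k$, and strip the type-decrease decoration. Consequently, the inductive statement must be strengthened so that all spine prefixes are treated uniformly, most cleanly through a joint induction on $k$ and derivation depth under a suitable lexicographic order. The type-decrease conditions demanded by the $\xFb$-rules then follow from $\tau(w) = a$ being basic together with the constraints on $\acc{f}$ and the hypothesis $\posof{a}{\tau(x_j)} = \varnothing$, which prevent the spine variables from reintroducing occurrences of $a$.
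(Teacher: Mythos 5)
Your proposal follows essentially the same route as the paper: unfold $\structsm{X}$, climb the spine $u\,x_1 \cdots x_k$ with \C{\xFb@}, \C{\xFb\xV} and \C{\xFb{\subt}} (the latter instantiating $t_i \bsubteq \cdot \asubteq \cdot \geq_\tau v$ with a trivial $\bsubteq$-step), and combine an outer induction on $k$ for arbitrary $v$ with an inner induction on the derivation of $u\,x_1 \cdots x_k \sqsupset_\tau v$ --- precisely the lexicographic strengthening you identify as necessary because \C{@{\subt}} and the auxiliary relation inside \C{@{=}} peel off spine prefixes. Two small corrections: the secondary case analysis is driven by the last rule used to derive $u\,x_1 \cdots x_k \sqsupset_\tau v$, which is necessarily one of the @-rules since the left-hand side is an application (so a target $g(\bar{u})$ can only arise via \C{@\xFs}, making $g$ small, and \C{\xFb{=}}/\C{\xFb{\succ}} are not the rules to reach for absent any precedence relation between $f$ and $g$); and no type-decrease side conditions actually need to be discharged via basicness or $\acc{f}$, since the big-symbol rules impose none and the single $\geq_\tau$ required in the base case is supplied verbatim by the hypothesis $u \sqsupseteq_\tau v$.
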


Note that the previous lemma holds for both CPO with accessible
subterms and NCPO. Hence, it allows us to do the usual optimizations
in the implementation of the rule \C{\xFb{=}} given in
\figref{optimizedFEqRule}. This can be justified in both cases as
follows: If $t_i \geq_\tau u_j$ then $f(\bar{t}) >^X u_j$ by
\C{\xFb{\subt}} and if $t_i \structsm{X} \cdot \geq_\tau u_j$ then
$f(\bar{t}) >^X u_j$ by \lemref{StructSmImpliesSuccX}.
The following lemma is the main result needed
for establishing $\beta\eta$-normal stability of the pair
$(>_\tau,\sqsupset_\tau)$.

\begin{figure}[t]
\centering
\begin{tabular}{r@{~~}l}
\C{\xFb{=}\mul} &
$f(\bar{t}) >^X g(\bar{u})$ if $f \simeq_\xF g$,
$\stat{f} = \mul$ and $\bar{t}
\mathrel{({>_\tau} \cup {\structsm{X} \cdot \geq_\tau})_{\mul}}
\bar{u}$
\smallskip \\
\C{\xFb{=}\lex} &
$f(\bar{t}) >^X g(\bar{u})$ if $f \simeq_\xF g$,
$\stat{f} = \lex$ and \\
& $\exists\h i.\,t_i
\mathrel{{>_\tau} \cup {\structsm{X} \cdot \geq_\tau}} u_i$
and $\forall j < i.\,t_j = u_j$
and $\forall j > i.\,f(\bar{t}) >^X t_j$
\end{tabular}
\caption{Optimized Rules for the Implementation of \C{\xFb{=}}.}
\label{fig:optimizedFEqRule}
\end{figure}

\begin{lemma}
\label{lem:subst}
Let $s >^X t$. For every $\beta\eta$-normal substitution $\sigma$
away from $X$ we have $s\sigma{\D} \sqsupset^X t'$ for some
$t\sigma \Rsbe t' \Rnbe t\sigma{\D}$.
Additionally, if $t = uv$ then $t' = u'v'$ with
$u\sigma \Rsbe u' \Rnbe u\sigma{\D}$ and
$v\sigma \Rsbe v' \Rnbe v\sigma{\D}$. 
\end{lemma}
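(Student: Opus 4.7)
The plan is to proceed by structural induction on the derivation of $s >^X t$, which is well-defined by the lexicographic order $({\subt},{\subt})_{\lex}$ on $(s,t)$ that justifies NCPO. For each rule in \figref{ncpo} that applies at the root, I pick the corresponding CPO rule and construct an intermediate term $t'$ with $t\sigma \Rsbe t' \Rnbe t\sigma{\D}$. The implicit invariant $s \in \Lambda_\nv$ is what makes the argument go through: it guarantees $(f(\bar{t}))\sigma{\D} = f(\bar{t}\sigma{\D})$, $(uv)\sigma{\D} = u\sigma{\D}\,v\sigma{\D}$ and $(\lambda x.u)\sigma{\D} = \lambda x.u\sigma{\D}$ in the respective situations, so that the top-level structure of $s\sigma{\D}$ matches that of $s$ and the CPO rule that mirrors the NCPO rule fires.

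I would first dispose of the easy rules. The subterm rules \C{\xFb{\subt}} and \C{\xFs{\subt}} reduce through parts (i)--(ii) of \lemref{subtNormStability}, which lift $\bsubt$ and $\asubt$ across $\sigma{\D}$; the trailing $\geq_\tau$ is closed by the IH or by taking $t' = v\sigma{\D}$ when the step is equality. The same-symbol and precedence rules \C{\xFb{=}}, \C{\xFb{\succ}}, \C{\xFs{=}}, \C{\xFs{\succ}}, \C{\xFb@}, \C{\xFs@} are dispatched by applying the IH to each right-hand premise and reassembling with the CPO analogue; the ${\structsm{X} \cdot \geq_\tau}$ branch of \C{\xFb{=}} invokes parts (iii)--(iv) of \lemref{subtNormStability}, which is precisely where the hypothesis that $\sigma$ is away from $X$ is used. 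The variable rules \C{\xFb\xV}, \C{@\xV}, \C{\lambda\xV}, \C{\xFs\xV} are immediate since $y \in X$ forces $y\sigma = y$.

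The application and abstraction cases require the additional clause of the lemma, which is what carries the induction through nested applications. For \C{@{=}} with right-hand side $t'u'$, the IH on the two premises yields $\hat t'$ and $\hat u'$ between $t'\sigma, u'\sigma$ and their respective normal forms, and the pair $\hat t'\hat u'$ is the intermediate term, sitting between $(t'u')\sigma$ and $(t'u')\sigma{\D}$ and simultaneously witnessing the additional clause. For \C{@\lambda}, \C{\lambda{=}} and \C{\lambda{\neq}} a fresh $z$ outside $X \cup \fv{\sigma} \cup \dom{\sigma}$ is chosen so that $\sigma$ remains away from $X \cup \{z\}$; the IH applied to the inner comparison then gives a witness whose $\lambda z$-closure modulo $\alpha$ serves as the intermediate term for $(\lambda y.v)\sigma{\D} = \lambda y.v\sigma{\D}$. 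The rule \C{\lambda{\subt}\eta}, which has no direct CPO analogue, is handled by invoking the IH on $t[x/z] \geq_\tau^X vz$ and then using that $\sqsupset$ contains $\Rb{\eta}$ monotonously to collapse the resulting $\lambda z$-abstraction back to the right-hand side.

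The main obstacle is the bookkeeping imposed by the additional clause: at every application-forming rule one must not only produce an intermediate $t'$ between $t\sigma$ and $t\sigma{\D}$ but also verify that it decomposes as $u'v'$ so the induction at the next level still applies. A secondary obstacle is preserving the $\geq_\tau$ side-conditions throughout; this is uniform, since both $\beta\eta$-reduction and application of a $\beta\eta$-normal substitution preserve types, so $\tau(t\sigma{\D}) = \tau(t)$ for every term arising in the proof.
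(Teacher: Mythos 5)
Your overall strategy --- induction on the derivation of $s >^X t$, mirroring each NCPO rule by its CPO analogue and using nonversatility of $s$ to preserve the head structure under $\sigma{\D}$ --- is the same as the paper's, and your treatment of the subterm, variable, application and abstraction rules (including the fresh-variable bookkeeping and the $\eta$-collapse in \C{\lambda{\subt}\eta}) matches the actual proof. However, there is a genuine gap in the cases \C{\xFb{=}} and \C{\xFs{=}}, which is exactly where the paper locates the main technical difficulty. Your plan is to ``apply the IH to each right-hand premise and reassemble'', but the conclusion of the lemma only provides \emph{some} witness $t'$ with $t\sigma \Rsbe t' \Rnbe t\sigma{\D}$ per application of the IH. The CPO rule \C{\xFb{=}} requires, for one and the same tuple $(u_1',\dots,u_m')$, both $f(\bar{t}\sigma{\D}) \sqsupset^X u_i'$ for all $i$ \emph{and} the $({\sqsupset_\tau} \cup {\structsm{X} \cdot \sqsupseteq_\tau})_{\stat{f}}$ comparison of the argument tuples. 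The IH applied to the premise $f(\bar{t}) >^X u_i$ and the IH applied to the argument comparison $t_j \mathrel{({>_\tau} \cup {\structsm{X} \cdot \geq_\tau})} u_i$ may return two \emph{different} intermediate terms between $u_i\sigma$ and $u_i\sigma{\D}$, so the two conditions need not hold for the same term and the reassembly fails.

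The paper resolves this by \emph{not} using the IH on the premise $f(\bar{t}) >^X u_i$ for the positions involved in the strict argument decrease: from $t_j\sigma{\D} \sqsupset_\tau u_i'$ one obtains $f(\bar{t}\sigma{\D}) \sqsupset^X u_i'$ for that same $u_i'$ directly by \C{\xFb{\subt}}, and from $t_j\sigma{\D} \structsm{X} \cdot \sqsupseteq_\tau u_i'$ by the separate \lemref{StructSmImpliesSuccX}, which you never invoke and which \lemref{subtNormStability}(iii)--(iv) does not replace (those parts only commute $\structsm{X}$ with $\sigma{\D}$; they do not convert a $\structsm{X}$-decrease of an argument into a $\sqsupset^X$-decrease of $f(\bar{t})$). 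For small symbols \lemref{StructSmImpliesSuccX} is unavailable, which is precisely why the paper weakens \C{\xFs{=}} to forbid the $\structsm{X}$ branch and then reconciles the two witnesses via type preservation and \C{\xFs{\subt}}. Without this witness-selection argument and \lemref{StructSmImpliesSuccX}, your induction does not go through in these two cases.
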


\begin{theorem}
\label{thm:order}
The pair $(>_\tau,\sqsupset_\tau)$ is a $\beta\eta$-normal higher-order
reduction order.
\end{theorem}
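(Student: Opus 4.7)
The plan is to verify the four clauses of Definition~\ref{def:reductionOrder} for the pair $(>_\tau,\sqsupset_\tau)$, leaning heavily on known results about the plain order $\sqsupset$ (CPO with accessible subterms and small symbols from \cite{BJR15}) and on Lemma~\ref{lem:subst}. The first three clauses reduce to already-established facts, and only $\beta\eta$-normal stability requires work.

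For well-foundedness of $\sqsupset_\tau$ I would invoke the well-foundedness of $\sqsupset$ proved in \cite{BJR15} together with the inclusion $\sqsupset_\tau\subseteq\sqsupset$. Monotonicity of $\sqsupset_\tau$ follows because contexts preserve the type of the hole, so for any context $C$ and any $s\sqsupset_\tau t$ we get $\tau(C[s])=\tau(C[t])$ (hence the type side-condition is trivial), while $C[s]\sqsupset C[t]$ is monotonicity of plain CPO. The inclusion ${\Rb{\beta}},{\Rb{\eta}}\subseteq{\sqsupset_\tau}$ follows because $\beta$- and $\eta$-reduction preserve types, and plain CPO contains both reductions via the rules \C{@\beta} and \C{\lambda\eta} which were explicitly kept in the CPO version we use.

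The main work is $\beta\eta$-normal stability: given $s>_\tau t$ with $s,t\in\nfbe$ and an arbitrary substitution $\sigma$, I have to show $s\sigma{\D}\sqsupset_\tau^+ t\sigma{\D}$. First I would replace $\sigma$ by its componentwise normalization $\sigma{\D}$, which is a $\beta\eta$-normal substitution yielding the same result of $s\sigma{\D}$ and $t\sigma{\D}$; this substitution is vacuously away from the empty variable set $X=\varnothing$. Unfolding $>_\tau$ gives $s>^\varnothing t$ together with $\tau(s)\succeq_\xT\tau(t)$. Now I would apply Lemma~\ref{lem:subst} with $X=\varnothing$ to obtain an intermediate term $t'$ with $s\sigma{\D}\sqsupset t'$ and $t\sigma \Rsbe t'\Rnbe t\sigma{\D}$.

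To upgrade this to a chain of $\sqsupset_\tau$ steps I would observe that types are preserved by substitution application and by $\beta\eta$-reduction, so $\tau(s\sigma{\D})=\tau(s)\succeq_\xT\tau(t)=\tau(t\sigma)=\tau(t')$, which turns the first step into $s\sigma{\D}\sqsupset_\tau t'$. The remaining reduction $t'\Rnbe t\sigma{\D}$ decomposes into individual $\beta$- or $\eta$-steps, each of which is a $\sqsupset_\tau$-step by the third clause verified above, yielding the desired $\sqsupset_\tau^+$ chain. The only genuinely delicate point in this proof is invoking Lemma~\ref{lem:subst} at the correct level of generality; all other steps are bookkeeping about types and about the fact that normalizing a substitution before application does not change $s\sigma{\D}$. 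Since that lemma is already stated above, the theorem itself will be short, and no further induction is needed here.
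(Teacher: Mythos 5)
Your proposal is correct and follows essentially the same route as the paper: the first three clauses are discharged by the known properties of plain CPO from \cite{BJR15}, and $\beta\eta$-normal stability is obtained by applying Lemma~\ref{lem:subst} with $X=\varnothing$, upgrading $\sqsupset$ to $\sqsupset_\tau$ via type preservation, and closing the gap $t'\Rnbe t\sigma{\D}$ with $\beta\eta$-steps inside $\sqsupset_\tau$. Your explicit normalization of $\sigma$ before invoking Lemma~\ref{lem:subst} (whose hypothesis requires a $\beta\eta$-normal substitution) is a small point of care that the paper's own proof leaves implicit.
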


\begin{proof}
Since $\sqsupset_\tau$ contains $\beta\eta$-reduction by definition,
\cite[Lemma 8.2]{BJR15} establishes its monotonicity and
\cite[Theorem 8.14]{BJR15} its well-foundedness,
we are left to prove $\beta\eta$-normal stability. Let $s >_\tau t$ and
$\sigma$ be a
substitution. By \lemref{subst} we obtain $s\sigma{\D} \sqsupset t'$
for some $t' \Rnbe t\sigma{\D}$. Since types are preserved under
the application of substitutions as well as $\beta\eta$-conversions
between well-typed terms, we obtain
$s\sigma{\D} \sqsupset_\tau t'$. Finally, we obtain
$s\sigma{\D} \sqsupset_\tau^+ t\sigma{\D}$ as
$\beta\eta$-steps are included in $\sqsupset_\tau$. \qed
\end{proof}

\section{Implementation}
\label{sec:nhorpo}

A prototype implementation of NCPO is available at GitHub.%
\footnote{\url{https://github.com/niedjoh/hrsterm}}
In contrast
to the implementations of NHORPO with neutralization as well as CPO
linked from \cite{JR15} and \cite{BJR15}, respectively, our implementation
is not a mere termination checker which requires all parameters of
the order as input but searches for suitable parameters using
SAT/SMT which is standard practice for termination tools such as
\cite{K20,G+17}. Implementing the search for parameters
also allows one to quickly find mistakes like in
\cite[Example 8.19]{BJR15} where it is claimed that small symbols are
needed whereas this is not true. Running the prototype
implementation linked from \cite{BJR15} with the parameters generated
from our prototype implementation confirms our findings.

Our prototype is implemented in Haskell and uses the recent Hasmtlib
package%
\footnote{\url{https://github.com/bruderj15/Hasmtlib}} to encode the
termination problems into the SMT-LIB 2 format%
\footnote{\url{https://smt-lib.org/}} and communicate with SMT solvers
which have to be installed separately. We use the well-known TPTP THF
format \cite{S17} as the input format of our prototype
implementation. Our parser supports a suitable fragment of TPTP THF
which consists of unit clauses where equality is the only allowed
predicate. Variables in rules are handled by universal
quantification. In particular, the chosen input format does not
support applied function symbols with fixed arities but solely relies
on application. We think this is appropriate as in higher-order
problems, fixed arities are a superfluous limitation in general and
only needed as additional structure for termination techniques based
on path orders. Hence, our implementation transforms the input into
a representation using applied function symbols with fixed arities for
the termination proof. The transformation chooses the maximal possible
arity for each function symbol which depends on the minimal number of
arguments which it is applied to in the problem. This gives us as much
structure as possible without having to write the arities down
explicitly.

The SMT encoding of NCPO contains more recursive cases and more global
conditions than the one for NHORPO, but except for mild cases like
$\structsm{X}$, terms only need to be decomposed. In particular,
we can easily encode the search for precedences and admissible type
orders of the class given in \lemref{admissibleOrdering} by
assigning an integer variable to each function symbol and base
type. The remaining parameters regarding the status of each function
symbol, which base types are basic, accessible arguments and small
symbols can be encoded by boolean variables. Apart from some global
conditions, the encoding then just needs to assert orientability of
each rule in a given HRS by taking the disjunction of all applicable
cases in the definition of NCPO and proceed recursively.
Unlike stated in \cite{JR15}, this drastically changes for the
transformation technique of normalization. In an efficient encoding,
one would have to syntactically analyze $\beta\eta$-normal forms of
transformed terms
of the original system depending on parameters which are only given
symbolically as the goal of the encoding is to search for suitable
values for them. This seems to be impossible unless the concrete
values of these parameters are hard-coded in a big
disjunction which makes the encoding much more verbose than
the usual encodings of precedences and the like. In our reimplementation
of NHORPO and neutralization, we precomputed the terms resulting
from the concrete parameters of neutralization in Haskell and used
the SMT solver interactively to try to find a solution for each
encoding in turn.

Apart from our reimplementation of NHORPO, we know of two other
fully automated implementations of HORPO which are applicable to
higher-order rewriting modulo $\beta\eta$, WANDA~\cite{K20} and
\csiho~\cite{N17}. WANDA implements a variation of HORPO for the
different rewrite formalism of AFSMs. AFSMs can model our
$\beta\eta$-normal flavor of HRSs by replacing the free variables
with meta-variables, their applications by meta-applications and
employing a $\beta\eta$-first reduction strategy which rewrites a
term to $\beta\eta$-normal form before performing a rewrite
step. However, a formal proof of such a result is only
available for a transformation of the subclass of pattern
HRSs in $\beta\eta$-long normal form \cite[Transformation 3.4]{K12}
to AFSMs. Furthermore, WANDA only supports a $\beta$-first
strategy, so in general the HORPO implementation in WANDA and our
implementation of NHORPO/NCPO cannot be directly compared. \csiho
is a confluence tool which implements a basic variant of HORPO which
can be used for pattern HRSs \cite{vR01} in $\beta\eta$-long normal
form. Hence, it is also not possible to directly compare the HORPO
implementation of \csiho with our prototype. Nevertheless, a
comparison with both WANDA and \csiho is instructive as the
investigated termination problems only differ in the treatment of
$\eta$.

\begin{table}[t]
\centering
\caption{Experimental results.}
\label{tab:experiments}
\newcommand{\ptime}[1]{\phantom{33.968~s}\llap{#1}}
\begin{tabular}{@{}l@{\qquad}c@{\qquad}c@{\quad}c@{}}
\toprule
problem & NCPO & NHORPO & NHORPO+neutralization \\
\midrule
\exaref{diff} (extended) &
$\checkmark$ \quad 0.080~s & $\checkmark$ \quad 0.022~s &
$\checkmark$ \quad \ptime{0.410~s} \\
\exaref{nnf} &
$\checkmark$ \quad 0.043~s & $\times$ \quad 0.011~s &
$\checkmark$ \quad \ptime{2.286~s} \\
\exaref{mapInc} &
$\checkmark$ \quad 0.020~s & $\times$ \quad 0.010~s &
$\times$ \quad \ptime{0.322~s} \\
\cite[Example 5.2]{BJR15} &
$\checkmark$ \quad 0.015~s & $\times$ \quad 0.009~s &
$\checkmark$ \quad \ptime{0.009~s} \\
\cite[Example 7.1]{BJR15} &
$\checkmark$ \quad 0.021~s & $\times$ \quad 0.011~s &
$\times$ \quad \ptime{33.968~s} \\
\cite[Example 8.19]{BJR15} &
$\checkmark$ \quad 0.026~s & $\times$ \quad 0.011~s &
$\times$ \quad \ptime{12.408~s} \\
\cite[Example 7.1]{JR15} &
$\checkmark$ \quad 0.032~s & $\times$ \quad 0.013~s &
$\checkmark$ \quad \ptime{0.498~s} \\
\cite[Example 7.2]{JR15} &
$\times$ \quad 0.016~s & $\times$ \quad 0.010~s &
$\times$ \quad \ptime{0.340~s} \\
\cite[Example 7.3]{JR15} &
$\checkmark$ \quad 0.025~s & $\times$ \quad 0.011~s &
$\checkmark$ \quad \ptime{0.166~s} \\
\texttt{neutr.p} &
$\times$ \quad 0.012~s & $\times$ \quad 0.008~s &
$\checkmark$ \quad \ptime{0.025~s} \\
\texttt{neutrN.p} &
$\checkmark$ \quad 0.018~s & $\checkmark$ \quad 0.010~s &
$\checkmark$ \quad \ptime{0.030~s} \\
\bottomrule
\end{tabular}
\end{table}

\tabref{experiments} compares our implementation of NCPO with our
reimplementation of NHORPO with and without neutralization on a small
set of problems. In all configurations, termination ($\checkmark$) or
the unsatisfiability of the encoding ($\times$) are checked using the
SMT solver Z3 \cite{dMB08}. Here, unsatisfiability of the encoding
means that termination cannot be established with the given method,
and does not imply nontermination. Furthermore, the table contains the
execution time of each invocation on an Intel Core i7-7500U CPU
running at a clock rate of 2.7 GHz with 15.5 GiB of main memory. As
has already been pointed out in \cite{JR15}, NHORPO on its own is
quite weak, so it is intended to be used together with
neutralization. For four problems, neutralization on top of NHORPO is
not enough: As already discussed, \exaref{mapInc} needs small symbols
and \cite[Example 8.19]{BJR15} does not work because (N)HORPO enforces
all subgoals to be weakly oriented in the admissible order on types
while this is not the case for (N)CPO. Furthermore, \cite[Example
7.1]{BJR15} shows that neutralization does not subsume accessible
subterms. Finally, note that \cite[Example 7.2]{JR15} is not solvable
by any of the three methods. In \cite{JR15} it is claimed that NHORPO
with neutralization is able to prove its termination but the proof
given there does not work since it contains comparisons where the
left-hand side is versatile. Indeed, our experiments confirm that
there does not exist a suitable parameter assignment for
neutralization which enables a termination proof with NHORPO. The
existence of \texttt{neutr.p} shows that NCPO and NHORPO with
neutralization have incomparable power. Note that NCPO succeeds in
establishing termination of a neutralized version (\texttt{neutrN.p})
of this problem. With respect to execution time, it can be seen that a
search for suitable parameters of NCPO does not take much more time
than the corresponding search for NHORPO parameters. For
neutralization, the number of parameters and therefore the execution
time of our search method can explode quickly in the presence of
function symbols with big arities which have more than one argument of
nonbase type. However, there may be a more efficient encoding of the
search for the neutralization parameters than the one implemented by
us. We also evaluated the HORPO implementations in WANDA and
\csiho on the problems given in \tabref{experiments}. The HORPO
implementation in WANDA yields termination proofs for
\texttt{neutr.p} and \texttt{neutrN.p} which shows that it is
incomparable with NHORPO. The HORPO implementation in \csiho performs
worse than NHORPO which is to be expected because it is based on a very
basic version of HORPO. The examples as well as a script to reproduce
our experiments are available on the GitHub repository of our
prototype implementation.

\section{Conclusion}
\label{sec:conclusion}

In this paper, we lifted the computability path order with its
extensions for accessible subterms and small symbols \cite{BJR15} from
plain to $\beta\eta$-normal higher-order rewriting. In order to
achieve that goal, we followed the approach from \cite{JR15} based on
$\beta\eta$-normal stability which allowed us to use monotonicity
and well-foundedness of CPO instead of proving the corresponding
properties once more for the new order. In
addition, we gave an improved sufficient condition for terms to be
nonversatile which is an important ingredient of the used
approach. The resulting order, dubbed NCPO, can prove termination of
systems which are out of reach for all other HORPO implementations
targeting normal higher-order rewriting known to us. Moreover, NCPO and
NHORPO with neutralization, its strongest competitor, are of
incomparable power. Contrary to what is claimed in \cite{JR15}, we
find it much more difficult to encode the search for suitable
parameters of NHORPO with neutralization than for NCPO. Thus, we think
that NCPO is a powerful and lightweight alternative to NHORPO with
neutralization.

As far as future work is concerned, transitivity of NCPO is an open
question. For NHORPO and CPO, the inclusion of $\beta$-reduction in
the order justifies that it is not transitive, but this is not the
case for NCPO. If $>_\tau$ were not transitive, then merely
checking $\ell >_\tau r$ for each rule would not be a complete
method for establishing termination with $(>_\tau,\sqsupset_\tau)$
using \thmref{termination}. Furthermore, the logical next step
towards a powerful termination technique for normal higher-order
rewriting would be the integration of NCPO into a dependency pair
technique for normal higher-order rewriting as defined for example in
\cite{KISB09}. Finally, given the complex nature and wide-ranging
pitfalls inherent in higher-order termination as well as the mistakes
found especially in \cite{JR15}, formalizations of higher-order
termination methods are needed. To the best of our knowledge, a
formalization of a basic variant of HORPO without computability
closure \cite{K09} is the only existing work in that direction.

\begin{credits}
\subsubsection{\ackname}
We thank the anonymous reviewers for their valuable comments
which improved the presentation of the paper.

\subsubsection{\discintname}
The authors have no competing interests to declare that are
relevant to the content of this article.
\end{credits}

\bibliographystyle{splncs04}
\bibliography{references}

\appendix

\section{Definition of $\spos{a}{\cdot}$}\label{app:spos}

The following definition of the set $\spos{a}{\cdot}$ is defined
by mutual recursion with four other sets of positions. Each set
corresponds to one computability property used in the termination
proof of CPO with small symbols in \cite[Section 8.3]{BJR15}. This
is needed in order to break a circularity in the inter-dependence of
these computability properties.

\begin{definition}[from \cite{BJR15}]
For each computability property in the termination proof of CPO
with small symbols, we define a set of positions with respect to a
given base type $a$. Here, \textsf{S} stands for the property
(comp-sn), \textsf{R} for (comp-red), \textsf{N} for (comp-neutral),
\textsf{L} for (comp-lam) and \textsf{C} for (comp-small).
\begin{align*}
\cpos{a}{a} &= \SET{\epsilon} \quad \cpos{a}{b} = \varnothing ~
\text{if $a \neq b$} \\
\spos{a}{b} &= \rpos{a}{b} = \npos{a}{b} = \lpos{a}{b} = \varnothing \\
\cpos{a}{U \to V} &= \npos{a}{U \to V} \\
\spos{a}{U \to V} &=
\SET{1p \mid p \in \npos{a}{U}} \cup \SET{2p \mid p \in \spos{a}{V}} \\
\rpos{a}{U \to V} &=
\SET{1p \mid p \in \npos{a}{U}} \cup \SET{2p \mid p \in \spos{a}{V}} \\
\npos{a}{U \to V} &= \SET{1p \mid p \in \spos{a}{U}} \cup
\SET{2p \mid p \in \lpos{a}{V} \cup \cpos{a}{V}} \\
\lpos{a}{U \to V} &= \cpos{a}{U \to V} \cup
\SET{1p \mid p \in \spos{a}{U} \cup \npos{a}{U}} \\
&\phantom{{} = \cpos{a}{U \to V} {}} \cup
\SET{2p \mid p \in \lpos{a}{V} \cup \cpos{a}{V}}
\end{align*}
\end{definition}

\section{Remaining Proofs}

\begin{proof}[of \lemref{nonversatile}]
We prove each case separately.
\begin{enumerate}[(i)]
\item
Direct consequence of the definition.
\item
If $uv \in \nfbe$ then $u$ is not an abstraction. Since $u$ is
nonversatile, for every substitution $\sigma$, $u\sigma{\D}$ is
also not an abstraction. Hence,
$(uv)\sigma{\D} = u\sigma{\D}v\sigma{\D}$
as desired.
\item
Consider $\lambda x.ux$ and an arbitrary substitution $\sigma$. Without
loss of generality, we may assume that $\sigma$ is away from $\SET{x}$
because capture-avoiding substitution is employed. Since
$\lambda x.ux \in \nfbe$, $x \in \fv{u}$. From nonversatility of $ux$ we
obtain $(ux)\sigma{\D} = u\sigma{\D} x$, so we have to
prove that $x \in \fv{u\sigma{\D}}$. To that end, we prove
by induction on $t$ that if $x \in \fv{t}$ and $t$ is nonversatile
in the case it is an abstraction then $x \in \fv{t\sigma{\D}}$.
If $t = x$, we are done immediately. The case $v = y \neq x$ is vacuously
true. If $v = f(\bar{t})$ then the induction hypothesis yields
$x \in \fv{t_i\sigma{\D}}$ whenever $x \in \fv{t_i}$.
Since $x \in \fv{f(\bar{t})}$ we obtain
$x \in \fv{f(\bar{t})\sigma{\D}} =
\fv{f(\bar{t}\sigma{\D})}$ as desired. If $t = uv$
then $t$ is nonversatile by assumption. We have
$x \in \fv{(uv)\sigma{\D}} = \fv{u\sigma{\D}} \cup
\fv{v\sigma{\D}}$ by the induction hypothesis. Finally,
consider $t = \lambda y.u$. Without loss of generality we may
assume that $y \neq x$ and $\sigma$ is away from $\SET{y}$. Hence,
$x \in \fv{(\lambda y.u)\sigma{\D}} =
\fv{u\sigma{\D}} \setminus \SET{y}$ by the induction
hypothesis.
\item
Consider $\lambda x.u$ and an arbitrary substitution $\sigma$.
Again, we may assume that $\sigma$ is away from $\SET{x}$ without
loss of generality. Note that $\lambda x.u$ is nonversatile if
and only if $u\sigma{\D} \neq u'x$ for some $u'$ with
$x \notin \fv{u'}$. If $u = x$ or $u = y \neq x$ we are done
immediately as $\sigma$ is away from $x$. If $u = f(\bar{t})$
then $u\sigma{\D} = f(\bar{t}\sigma{\D})$. If
$u = vw$ it is nonversatile by assumption. Hence,
$(vw)\sigma{\D} = v\sigma{\D} w\sigma{\D}$. We
have to prove
$w\sigma{\D} \neq x$ for the case $w \neq x$. We proceed
by case analysis on $w$. If $w = y \neq x$ we are done
immediately as $\sigma$ is away from $\SET{x}$. If
$w = f(\bar{t})$ then
$w\sigma{\D} = f(\bar{t}\sigma{\D})$. If
$w = w_1w_2$ then $(w_1w_2)\sigma{\D} =
w_1\sigma{\D} w_2\sigma{\D} \neq x$ as $w_1w_2$ is
nonversatile. If $w = \lambda y.w'$ we can assume without loss of
generality that $\sigma$ is away from $\SET{y}$ and
$(\lambda y.w')\sigma{\D} = \lambda y.w'\sigma{\D} \neq x$
as $\lambda y.w'$ is nonversatile by assumption. Hence,
$w\sigma{\D} \neq x$ in any case. Finally, consider
$u = \lambda y.u$. Without loss of generality, $\sigma$ is away
from $\SET{y}$ and from the assumption that $u$ is nonversatile we
obtain $u\sigma{\D} = \lambda y.u\sigma{\D}$. \qed
\end{enumerate}
\end{proof}

\begin{proof}[of \lemref{subtNormStability}]
Let $\sigma$ be a substitution. Clearly, $s = t$ implies
$s\sigma{\D} = t\sigma{\D}$ (\textasteriskcentered).
\begin{enumerate}[(i)]
\item
We proceed by induction on $s \bsubt t$. If $s = f(\seq{s})$ then
there exists an $1 \leq i \leq n$ such that $s_i \bsubteq t$.
The induction hypothesis together with (\textasteriskcentered)
yields
$f(\seq{s})\sigma{\D} =
f(s_1\sigma{\D},\dots,s_n\sigma{\D}) \bsubt
s_i\sigma{\D} \bsubteq t\sigma{\D}$. If
$s = uv$ then $w \bsubteq t$ for some $w \in \SET{u,v}$. We obtain
$(uv)\sigma{\D} = u\sigma{\D}v\sigma{\D} \bsubt
w\sigma{\D} \bsubteq t\sigma{\D}$ by
nonversatility of $s$, the induction hypothesis and
(\textasteriskcentered). If $s = \lambda x.u$ then $u \bsubteq t$.
Without loss of generality, we may assume that $\sigma$ is
away from $\SET{x}$. Hence, we have
$(\lambda x.u)\sigma{\D} = \lambda x.u\sigma{\D} \bsubt
u\sigma{\D} \bsubteq t\sigma{\D}$ by nonversatility of
$s$, the induction hypothesis and (\textasteriskcentered).
\item
We proceed by induction on $s \asubt t$. Let
$s = f(s_1,\dots,s_{\ar{f}}) s_{\ar{f}+1} \cdots s_n$
and $s_i \asubteq t$ for some $i \in \acc{f}$. The induction
hypothesis together with (\textasteriskcentered) yields
$s_i\sigma{\D} \asubteq t\sigma{\D}$. From
$s\sigma{\D} =
f(s_1\sigma{\D},\dots,s_{\ar{f}}\sigma{\D})
s_{\ar{f}+1}\sigma{\D} \cdots s_n\sigma{\D}$ we obtain
$s\sigma{\D} \asubt t\sigma{\D}$ as desired.
\item
Let $s \structsm{X} t$ and $\sigma$ be a substitution away from $X$.
By definition of $\structsm{X}$, $t = u x_1 \cdots x_k$ for some
$\seq[k]{x} \in X$ and $u$ with $s \asubt u$. From part (ii) we obtain
$s\sigma{\D} \asubt u\sigma{\D}$. Since $\sigma$
is away from $X$, we obtain
$s\sigma{\D} \structsm{X} u\sigma{\D} x_1 \cdots x_k
\Rnbe (u x_1 \cdots x_n)\sigma{\D}$ as desired.
\item
Let $s \structsm{X} t$, $t \in \Lambda_\nv$ and $\sigma$ be a substitution
away from $X$. By definition of $\structsm{X}$, $t = u x_1 \cdots x_k$ for
some $\seq[k]{x} \in X$ and $u$ with $s \asubt u$. We obtain
$s\sigma{\D} \structsm{X} u\sigma{\D} x_1 \cdots x_k$
by part (iii). From $t \in \Lambda_\nv$ we conclude that
$u \in \Lambda_\nv$ is not an abstraction. Hence,
$u\sigma{\D}$ is also not an abstraction and
$t\sigma{\D} = u\sigma{\D} x_1 \cdots x_k$ as
desired. \qed
\end{enumerate}
\end{proof}

\begin{proof}[of \lemref{StructSmImpliesSuccX}]
In the following, we denote $(\seq{t})$ by $\bar{t}$.
Let $t_i \structsm{X} \cdot \sqsupseteq_\tau v$, so there exists a term
$u$ such that $t_i \asubt u$ and $u x_1 \cdots x_k \sqsupseteq_\tau v$ for
some $\seq[k]{x} \in X$. We proceed by induction on $k$ for
arbitrary $v$ (1). If $k = 0$ then $u \sqsupseteq_\tau v$ and therefore
$f(\bar{t}) \sqsupset^X v$ by \C{\xFb{\subt}}. For the step case, we
first
consider $u x_1 \cdots x_{k+1} = v$. Then, $k+1$ applications of
\C{\xFb@} and \C{\xFb\xV} together with one application of
\C{\xFb{\subt}} establish $f(\bar{t}) \sqsupset^X v$. We proceed by
inner induction on $u x_1 \cdots x_{k+1} \sqsupset_\tau v$ (2).
\begin{itemize}
\item[\C{@{\subt}}]
Let $u x_1 \cdots x_{k+1} \sqsupset v$ because
$u x_1 \cdots x_k \sqsupseteq v$ or $x_{k+1} \sqsupseteq_\tau v$. In the
latter
case we have $v = x_{k+1}$ and $f(\bar{t}) \sqsupset^X v$ by
\C{\xFb\xV}. In the former case, if $v = u x_1 \cdots x_k$ then $k$
applications of \C{\xFb@} and \C{\xFb\xV} together with one
application of \C{\xFb{\subt}} establish
$f(\bar{t}) \sqsupset^X v$. Otherwise, induction hypothesis (1)
yields $f(\bar{t}) \sqsupset^X v$ as desired.
\item[\C{@{=}}]
Let $u x_1 \cdots x_{k+1} \sqsupset v_1v_2$ because
$(u x_1 \cdots x_k) = v_1$ and $x_{k+1} \sqsupset v_2$. This case is
impossible. Now let $u x_1 \cdots x_{k+1} \sqsupset v_1v_2$ because for
each $w \in \SET{v_1,v_2}$ at least one of
$u x_1 \cdots x_k \sqsupset_\tau w$ or $x_{k+1} \sqsupseteq_\tau w$ or
$u x_1 \cdots x_{k+1} \sqsupset_\tau w$ holds. In the first case,
induction hypothesis (1) yields $f(\bar{t}) \sqsupset^X w$. The second
case demands $w = x_{k+1}$, so $f(\bar{t}) \sqsupset^X w$ by
\C{\xFb\xV}. For the third case, induction hypothesis (2) yields
$f(\bar{t}) \sqsupset^X w$. In any case, we obtain
$f(\bar{t}) \sqsupset^X v_1$ as well as
$f(\bar{t}) \sqsupset^X v_2$ and
therefore $f(\bar{t}) \sqsupset^X v_1v_2$ by \C{\xFb@}.
\item[\C{@\lambda}]
Let $u x_1 \cdots x_{k+1} \sqsupset \lambda y.w$
because $u x_1 \cdots x_{k+1} \sqsupset w[y/z]$ for a fresh variable $z$
with $\tau(z) = \tau(y)$. Induction hypothesis (2) yields
$f(\bar{t}) \sqsupset^X w[y/z]$ and therefore also
$f(\bar{t}) \sqsupset^{X \cup \{z\}} w[y/z]$ since
${\sqsupset^X} \subseteq {\sqsupset^Y}$ whenever $X \subseteq Y$ as can be
easily seen from the definition. Hence,
$f(\bar{t}) \sqsupset^X \lambda y.w$ by \C{\xFb\lambda}.
\item[\C{@\xV}]
This case is impossible as for all $y \in \xV$,
$y \not\in \varnothing$. \qed
\end{itemize}
\end{proof}

\begin{proof}[of \lemref{subst}]
Note that types are preserved under the application of substitutions
as well as $\beta\eta$-conversion between well-typed terms
(\textasteriskcentered). We proceed by induction on $s >^X t$.
\begin{itemize}
\item[\C{\xFb{\subt}}]
Let $f(\seq{t}) >^X v$ because $f \in \xFb$ and
$t_i \bsubteq t_i' \asubteq t_i'' \geq_\tau v$ for some
$1 \leq i \leq n$ and terms $t_i'$, $t_i''$. Note that any term in
this sequence only has to be nonversatile if there is a strict
step after its occurrence. \lemref{subtNormStability}(1,2)
yields $t_i\sigma{\D} \bsubteq t_i'\sigma{\D} \asubteq
t_i''\sigma{\D}$. By the induction hypothesis together with
(\textasteriskcentered) we obtain
$t_i''\sigma{\D} \sqsupseteq_\tau v'$ for some
$v\sigma \Rsbe v' \Rnbe v\sigma{\D}$. Hence, $f(\bar{t})\sigma{\D} =
f(\bar{t}\sigma{\D}) \sqsupset^X v' \Rnbe v\sigma$ by
\C{\xFb{\subt}}.
\item[\C{\xFb{=}}]\smallskip
Let $f(\seq{t}) >^X g(\seq[m]{u})$ because $f \in \xFb$,
$f \simeq g$, $f(\bar{t}) >^X u_i$ for all
$1 \leq i \leq m$ and $(\seq{t})
\mathrel{({>_\tau} \cup {\structsm{X} \cdot \geq_\tau})_{\stat{f}}}
(\seq[m]{u})$. \smallskip

If $\stat{f} = \stat{g} = \lex$ then there exists an
$i < \minrel{n}{m}$ such that
$t_i \mathrel{{>_\tau} \cup {\structsm{X} \cdot \geq_\tau}} u_i$ and
$t_j = u_j$ for all $j < i$. Clearly, we have
$t_j\sigma{\D} = u_j\sigma{\D}$ for all $j < i$. The
induction hypothesis together with \lemref{subtNormStability}(3,4) and
(\textasteriskcentered) yields $t_i\sigma{\D}
\mathrel{{\sqsupset_\tau} \cup {\structsm{X} \cdot \sqsupseteq_\tau}} u_i'$
for some $u_i\sigma \Rsbe u_i' \Rnbe u_i\sigma{\D}$. Furthermore, for all
$m \geq k > i$ there is some $u_k\sigma \Rsbe u_k' \Rnbe u_k\sigma{\D}$
such that $f(\seq{t})\sigma \sqsupset^X u_k'$ by the induction
hypothesis, so overall we obtain
$(t_1\sigma{\D},\dots,t_n\sigma{\D})
\mathrel{({>_\tau} \cup {\structsm{X} \cdot \geq_\tau})_\lex}
(u_1\sigma{\D},\dots,u_i',\dots,u_m')$. For every term
$u \in \SET{u_1\sigma{\D},\dots,u_{i-1}\sigma{\D},u_i'}$,
$f(\seq{t})\sigma{\D} =
f(t_1\sigma{\D},\dots,t_n\sigma{\D}) \sqsupset^X u$ holds by
\C{\xFb{\subt}} or \lemref{StructSmImpliesSuccX}. Hence, we obtain
$f(\seq{t})\sigma{\D} =
f(t_1\sigma{\D},\dots,t_n\sigma{\D}) \sqsupset^X
g(u_1\sigma{\D},\dots,u_i',\dots,u_m') \Rnbe
g(\seq[m]{u})\sigma{\D}$ by \C{\xFb{=}}. \smallskip

If $\stat{f} = \stat{g} = \mul$ then
$\SET{\seq[m]{u}} = (\SET{\seq{t}} \setminus X) \uplus Y$ for some
nonempty set $X \subseteq \SET{\seq{t}}$ and for all $u_i \in Y$ there
exists a $t_j \in X$ with $t_j >^X_\tau u_i$. The induction hypothesis
together with \lemref{subtNormStability}(3,4) and (\textasteriskcentered)
yields $t_j\sigma{\D}
\mathrel{{\sqsupset_\tau} \cup {\structsm{X} \cdot \sqsupseteq_\tau}} u_i'$
for some $u_i\sigma \Rsbe u_i' \Rnbe u_i\sigma$. As observed before,
$u_i = t_j$ implies
$u_i\sigma{\D} = t_j\sigma{\D}$. Hence, we obtain
$(t_1\sigma{\D},\dots,t_n\sigma{\D})
\mathrel{({>_\tau} \cup {\structsm{X} \cdot \geq_\tau})_\mul}
(u_1',\dots,u_m')$ where
$u_i' = u_i\sigma{\D}$ if $u_i \not\in Y$. Moreover,
\C{\xFb{\subt}} or \lemref{StructSmImpliesSuccX} yields
$f(\seq{t})\sigma{\D} =
f(t_1\sigma{\D},\dots,t_n\sigma{\D}) \sqsupset^X u_i'$ for
all $1 \leq i \leq m$. Thus, we obtain $f(\seq{t})\sigma{\D} =
f(t_1\sigma{\D},\dots,t_n\sigma{\D}) \sqsupset^X
g(u_1',\dots,u_m') \Rnbe g(\seq[m]{u})\sigma{\D}$ by
\C{\xFb{=}}.
\item[\C{\xFb{\succ}}] \smallskip
Let $f(\bar{t}) >^X g(\seq[m]{u})$ because $f \in \xFb$,
$f \succ_\xF g$ and $f(\bar{t}) >^X u_i$ for all
$1 \leq i \leq m$. For each $u_i$, the induction hypothesis yields
$f(\bar{t})\sigma{\D} \sqsupset^X u_i'$ for some
$u_i\sigma \Rsbe u_i' \Rnbe u_i\sigma{\D}$ and we obtain
$f(\bar{t})\sigma{\D} = f(\bar{t}\sigma{\D})
\sqsupset^X g(u_1',\dots,u_m') \Rnbe g(\seq[m]{u})\sigma{\D}$
by \C{\xFb{\succ}}.
\item[\C{\xFb@}] \smallskip
Let $f(\bar{t}) >^X uv$ because $f \in \xFb$,
$f(\bar{t}) >^X u$ and $f(\bar{t}) >^X v$. The induction
hypothesis yields $f(\bar{t})\sigma{\D} \sqsupset^X u'$ and
$f(\bar{t})\sigma{\D} \sqsupset^X v'$ for some
$u\sigma \Rsbe u' \Rnbe u\sigma{\D}$ and
$v\sigma \Rsbe v' \Rnbe v\sigma{\D}$. Thus,
$f(\bar{t})\sigma{\D} = f(\bar{t}\sigma{\D})
\sqsupset^X u'v' \Rnbe (uv)\sigma{\D}$ by \C{\xFb@}.
\item[\C{\xFb\lambda}] \smallskip
Let $f(\bar{t}) >^X \lambda y.v$ because $f \in \xFb$,
$f(\bar{t}) >^{X \cup \{z\}} v[y/z]$ for a fresh $z$ with
$\tau(y) = \tau(z)$. Given a substitution $\sigma$ away from $X$, we may
assume that $\sigma$ is away from $\SET{y}$ without loss of generality.
Furthermore, we can always choose a fresh $z'$ with $\tau(z) = \tau(z')$
such that $\sigma$ is away from $\SET{z'}$ and we still have
$f(\bar{t}) >^{X \cup \{z'\}} v[y/z']$. The induction hypothesis
yields $f(\bar{t})\sigma{\D} \sqsupset^{X \cup \{z'\}} v'[y/z']$
for some $v[y/z']\sigma \Rsbe v'[y/z'] \Rnbe v[y/z']\sigma{\D}$.
Note that we
can immediately fix $v'[y/z']$ as the form of the term because
$y$ does not occur freely in $v[y/z']\sigma$ by assumption.
Since $\sigma$ is away from $\SET{y,z'}$
we have $v\sigma \Rsbe v' \Rnbe v\sigma{\D}$ and
$f(\bar{t})\sigma{\D} = f(\bar{t}\sigma{\D})
\sqsupset^X \lambda y.v' \Rsbe \lambda y.v\sigma{\D}
\Rnbe (\lambda y.v)\sigma{\D}$ by \C{\xFb\lambda}.
\item[\C{\xFb\xV}] \smallskip
Let $f(\bar{t}) >^X y$ because $f \in \xFb$ and $y \in X$. Hence,
for $\sigma$ away from $X$ we also have
$f(\bar{t})\sigma{\D} =
f(\bar{t}\sigma{\D}) \sqsupset^X y = y\sigma$ by \C{\xFb\xV}.
\item[\C{@{\subt}}] \smallskip
Let $tu >^X v$ because $t \geq^X v$ or
$u \geq_\tau^X v$. We obtain $t\sigma{\D} \sqsupseteq^X v'$
or $u\sigma{\D} \sqsupseteq_\tau^X v'$ for some
$v\sigma \Rsbe v'\Rnbe v\sigma{\D}$ by the induction hypothesis and
(\textasteriskcentered). Hence, $(tu)\sigma{\D} =
(t\sigma{\D})(u\sigma{\D}) \sqsupset^X v' \Rnbe v\sigma$ by
nonversatility of $tu$ and \C{@{\subt}}.
\item[\C{@{=}}] \smallskip
Let $tu >^X t'u'$ because $t = t'$ and $u >^X u'$.
Then $t\sigma{\D} = t'\sigma{\D}$ and
$u\sigma{\D} \sqsupset^X u''$ for some
$u'\sigma \Rsbe u'' \Rnbe u'\sigma{\D}$ by the induction hypothesis. Hence,
$(tu)\sigma{\D} = (t\sigma{\D})(u\sigma{\D})
\sqsupset^X (t'\sigma{\D})u'' \Rnbe (t'u')\sigma{\D}$ by
nonversatility of $tu$ and \C{@{=}}. Otherwise, let $tu >^X t'u'$
because for each $v \in \SET{t',u'}$ at least one of $t >_\tau^X v$,
$u \geq_\tau^X v$ and $tu >_\tau^X v$ holds. By the induction
hypothesis together with (\textasteriskcentered) there is some
$v\sigma \Rsbe v' \Rnbe v\sigma{\D}$ such that at least one of
$t\sigma{\D} \sqsupset_\tau^X v'$,
$u\sigma{\D} \sqsupseteq_\tau^X v'$ and
$(tu)\sigma{\D} \sqsupset_\tau^X v'$ holds. By denoting $v'$
by $t''$ if $v = t'$ and by $u''$ if $v = u'$, we obtain
$(tu)\sigma{\D} = (t\sigma{\D})(u\sigma{\D})
\sqsupset^X t''u'' \Rnbe (t'u')\sigma{\D}$ by nonversatility of $tu$
and \C{@{=}}.
\item[\C{@\lambda}] \smallskip
The same reasoning as in case \C{\xFb\lambda} applies as $tu$ is
nonversatile.
\item[\C{@\xFs}] \smallskip
Let $tu >^X f(\seq[m]{v})$ because $f \in \xFs$ and
$tu >_\tau^X v_i$ for all $1 \leq i \leq m$. By the induction hypothesis
together with (\textasteriskcentered) we obtain
$(tu)\sigma{\D} \sqsupset_\tau^X v_i'$ for some
$v_i\sigma \Rsbe v_i' \Rnbe v_i\sigma{\D}$ for all $1 \leq i \leq m$.
Hence,
$(tu)\sigma{\D} = t\sigma{\D}u\sigma{\D} \sqsupset^X
f(v_1',\dots,v_m') \Rnbe f(\seq[m]{v})\sigma{\D}$ by
nonversatility of $tu$ and \C{@\xFs}.
\item[\C{@\xV}] \smallskip
The same reasoning as in case \C{\xFb\xV} applies as
$uv$ is nonversatile.
\item[\C{\lambda{\subt}}] \smallskip
Let $\lambda x.t >^X v$ because $t[x/z] \geq_\tau^X v$ for a fresh $z$
with $\tau(x) = \tau(z)$. Given a substitution $\sigma$ away from $X$, we
may assume that $\sigma$ is away from $\SET{x}$ without loss of
generality. Furthermore, we can always choose a fresh $z'$ with
$\tau(z) = \tau(z')$ such that $\sigma$ is away from $\SET{z'}$ and
we still have $t[x/z'] \geq_\tau^X v$. The induction hypothesis
together with (\textasteriskcentered) yields
$t[x/z']\sigma{\D} \sqsupseteq_\tau^X v'$ for some
$v\sigma \Rsbe v' \Rnbe v\sigma{\D}$. Since $\sigma$ is away from
$\SET{x,z'}$ we have $t[x/z']\sigma = t\sigma[x/z']$ and therefore
$t\sigma{\D}[x/z'] \sqsupseteq_\tau^X v' \Rnbe v\sigma{\D}$.
Hence, $(\lambda x.t)\sigma{\D} = \lambda x.t\sigma{\D}
\sqsupset^X v' \Rnbe v\sigma{\D}$ by nonversatility of
$\lambda x.t$ and \C{\lambda{\subt}}.
\item[\C{\lambda{\subt}\eta}] \smallskip
Let $\lambda x.t >^X v$ because $t[x/z] \geq_\tau^X vz$ for a fresh $z$
with $\tau(x) = \tau(z)$. Given a substitution $\sigma$ away from $X$, we
may assume that $\sigma$ is away from $\SET{x}$ without
loss of generality. Furthermore, we can always choose a fresh
$z'$ with $\tau(z) = \tau(z')$ such that $\sigma$ is away from
$\SET{z'}$ and we still have $t[x/z'] \geq_\tau^X vz'$. The
induction hypothesis together with (\textasteriskcentered)
yields $t[x/z']\sigma{\D} \sqsupseteq_\tau^X v'w'$ for some
$v\sigma \Rsbe v' \Rnbe v\sigma{\D}$ and
$z'\sigma \Rsbe w' \Rnbe z'\sigma{\D}$. Since
$\sigma$ is away from $\SET{x,z'}$ we have $w' = z'$,
$t[x/z']\sigma = t\sigma[x/z']$ and therefore
$t\sigma{\D}[x/z'] \sqsupseteq_\tau^X v'z' \Rnbe
(vz')\sigma{\D}$. Monotonicity of $\sqsupseteq_\tau$ together with
$\alpha$-renaming and nonversatility of $\lambda x.t$ yields
$(\lambda x.t)\sigma{\D} = \lambda x.t\sigma{\D}
\sqsupseteq_\tau^X \lambda z'.v'z'$. If
$\lambda x.t\sigma{\D} \sqsupset_\tau^X \lambda z'.v'z'$ we are done
as $\lambda z'.v'z' \Rb{\eta} v' \Rnbe v\sigma{\D}$. Otherwise,
$\lambda x.t\sigma{\D} = \lambda z'.v'z' \sqsupset_\tau v'
\Rnbe v\sigma{\D}$ as $\sqsupset_\tau$ contains $\eta$.
\item[\C{\lambda{=}}] \smallskip
Let $\lambda x.t >^X \lambda y.v$ because $t[x/z] >^X v[y/z]$ for a fresh
$z$ with $\tau(x) = \tau(y) = \tau(z)$. Given a substitution $\sigma$ away
from $X$, we may assume that $\sigma$ is away from $\SET{x,y}$
without loss of generality. Furthermore, we can always choose a
fresh $z'$ with $\tau(z) = \tau(z')$ such that $\sigma$ is away
from $z'$ and we still have $t[x/z'] >^X v[y/z']$. The induction
hypothesis yields $t[x/z']\sigma{\D} \sqsupset^X v'[y/z']$ for
some $v[y/z']\sigma \Rsbe v'[y/z'] \Rnbe v[y/z']\sigma{\D}$.
Note that we
can immediately fix $v'[y/z']$ as the form of the term because
$y$ does not occur freely in $v[y/z']\sigma$ by assumption.
Since $\sigma$ is away from
$\SET{x,y,z'}$ we have $t[x/z']\sigma = t\sigma[x/z']$ as well as
$v[y/z']\sigma = v\sigma[y/z']$ and thus
$t\sigma{\D}[x/z'] \sqsupset^X v'[y/z'] \Rnbe v\sigma[y/z']$
as well as $v\sigma \Rsbe v' \Rnbe v\sigma{\D}$. Hence,
$(\lambda x.t)\sigma{\D} = \lambda x.t\sigma{\D} \sqsupset^X
\lambda y.v' \Rsbe \lambda y.v\sigma{\D} \Rnbe
(\lambda y.v)\sigma{\D}$ by nonversatility of $\lambda x.t$ and
\C{\lambda{=}}.
\item[\C{\lambda{\neq}}] \smallskip
Let $\lambda x.t >^X \lambda y.v$ because $\lambda x.t >^X v[y/z]$ for a
fresh variable $z$ with $\tau(x) \neq \tau(y) = \tau(z)$. Given a
substitution $\sigma$ away from $X$, we may assume that $\sigma$ is away
from $\SET{x,y}$ without loss of generality. Furthermore, we can always
choose a fresh $z'$ with $\tau(z) = \tau(z')$ such that $\sigma$ is
away from $\SET{z'}$ and we still have $\lambda x.t >^X v[y/z']$. The
induction hypothesis yields
$(\lambda x.t)\sigma{\D} \sqsupset^X v'[y/z']$ for some
$v[y/z']\sigma \Rsbe v'[y/z'] \Rnbe v[y/z']\sigma{\D}$.
Note that we
can immediately fix $v'[y/z']$ as the form of the term because
$y$ does not occur freely in $v[y/z']\sigma$ by assumption.
Since $\sigma$ is away from
$\SET{x,y,z'}$ we have $v[y/z']\sigma = v\sigma[y/z']$ and therefore
$v\sigma \Rsbe v' \Rnbe v\sigma{\D}$. Hence,
$(\lambda x.t)\sigma{\D} = \lambda x.t\sigma{\D} \sqsupset^X
\lambda y.v' \Rsbe \lambda y.v\sigma{\D} \Rnbe
(\lambda y.v)\sigma{\D}$ by nonversatility of $\lambda x.t$ and
\C{\lambda{\neq}}.
\item[\C{\lambda\xFs}] \smallskip
Let $\lambda x.t >^X f(\seq{v})$ because $f \in \xFs$
and $\lambda x.t >_\tau^X v_i$ for all $1 \leq i \leq m$. By the
induction hypothesis together with (\textasteriskcentered) we
obtain $(\lambda x.t)\sigma{\D} \sqsupset_\tau^X v_i'$ for
some $v_i\sigma \Rsbe v_i' \Rnbe v_i\sigma{\D}$ for all $1 \leq i \leq m$.
Hence, $(\lambda x.t)\sigma{\D} = \lambda x.t\sigma{\D} \sqsupset^X
f(v_1',\dots,v_m') \Rnbe f(\seq[m]{v})\sigma{\D}$ by
nonversatility of $\lambda x.t$ and \C{\lambda\xFs}.
\item[\C{\lambda\xV}] \smallskip
The same reasoning as in case \C{\xFb\xV} applies as $\lambda x.t$ is
nonversatile and we may assume that $\sigma$ is away from $\SET{x}$
without loss of generality.
\item[\C{\xFs{\subt}}] \smallskip
This is a special case of \C{\xFb{\subt}}.
\item[\C{\xFs{=}}] \smallskip
Here, we can reason as in case \C{\xFb{\subt}} but can do without
\lemref{StructSmImpliesSuccX} which is not available for small function
symbols: If $t_j\sigma{\D} \geq_\tau u_i'$ for some
$u_i\sigma \Rsbe u_i' \Rnbe u_i\sigma{\D}$ then also
$f(\bar{t}) \sqsupset^X u_i'$
by \C{\xFs{\subt}}. Furthermore, since we can also assume that
$f(\bar{t}) \sqsupset_\tau^X u_i''$ for some
$u_i\sigma \Rsbe u_i'' \Rnbe u_i\sigma{\D}$, we
obtain $f(\bar{t}) \sqsupset_\tau^X u_i'$ by (\textasteriskcentered).
\item[\C{\xFs{\succ}}]
The same reasoning as in case \C{\xFb{\succ}} applies by additionally
using (\textasteriskcentered).
\item[\C{\xFs@}] \smallskip
The same reasoning as in case \C{\xFb@} applies by additionally using
(\textasteriskcentered).
\item[\C{\xFs\xV}] \smallskip
The same reasoning as in case \C{\xFb\xV} applies. \qed
\end{itemize}
\end{proof}

The proof of the case for \C{\xFs{=}} shows why we have chosen a
weaker formulation of the rule \C{\xFs{=}}. This problem does not
occur in CPO because it can be directly proven that $s \sqsupset^X t$
implies $s\sigma \sqsupset^X t\sigma$. In our setting, we can only prove
that $s >^X t$ implies $s\sigma{\D} \sqsupset^X t'$ for some
$t\sigma \Rsbe t' \Rnbe t\sigma{\D}$ which means that we may get two
different witnesses from two applications of the induction
hypothesis as it is the case in \C{\xFb{=}} and \C{\xFs{=}}. For big
function symbols, this issue could be resolved by establishing
\lemref{StructSmImpliesSuccX} but for small function symbols one
would need a version of \C{\xFs{\subt}} which contains $\asubt$ as
it is the case for \C{\xFb{\subt}}.

\end{document}